\newtheorem{theorem}{Theorem}
\newtheorem{lemma}{Lemma}
\newtheorem{proposition}[theorem]{Proposition}
\newtheorem{example}{Example}
\newtheorem{remark}{Remark}
\newtheorem{definition}{Definition}
\newcommand{\ziv}{\mathbb{Z}_4}
\newcommand{\Z}{\mathbb{Z}_4+\omega\mathbb{Z}_4}
\newcommand{\Zf}{\mathbb{Z}_4[\omega]/\langle \omega^2-\omega \rangle}
\newcommand{\e}{a+\omega b}
\newcommand{\R}{\textfrak{R}}
\newcommand{\la}{\langle}
\newcommand{\ra}{\rangle}
\newcommand{\td}{(\text{\textbaro}, \mathfrak{d})}
\newcommand{\tdl}{(\text{\textbaro}, \mathfrak{d}, \gamma)}
\newcommand{\C}{\textfrak{C}}
\newcommand{\cw}{(\textfrak{c}_0,\textfrak{c}_1,\dots,\textfrak{c}_{n-1})}
\newcommand{\Rx}{\R[x;\text{\textbaro},\mathfrak{d}]}
\begin{document}
	\title{DNA codes from $(\text{\textbaro}, \mathfrak{d}, \gamma)$-constacyclic codes over $\mathbb{Z}_4+\omega\mathbb{Z}_4$}
	\author{Priyanka Sharma}
	\address{Department of Mathematics, Indian Institute of Technology Patna, Patna-801106, India}
	\curraddr{}
	\email{E-mail:priyanka\_2121ma13@iitp.ac.in}
	\thanks{}
	
	\author{Ashutosh Singh}
	\address{Department of Mathematics, Indian Institute of Technology Patna, Patna-801106, India}
	\curraddr{}
	\email{E-mail:ashutosh\_1921ma05@iitp.ac.in}
	\thanks{}
	
	\author{Om Prakash$^{\star}$}
	\address{Department of Mathematics, Indian Institute of Technology Patna, Patna-801106, India}
	\curraddr{}
	\email{om@iitp.ac.in}
	\thanks{* Corresponding author}

	\subjclass{16S36, 92D20, 94B05, 94B15, 94B60}
	
	\keywords{$(\text{\textbaro},\mathfrak{d}, \gamma)$-constacyclic codes; Gray map; Reciprocal polynomial; Reversible code; DNA code}
	
	\dedicatory{}

\begin{abstract}
The article introduces a novel technique for constructing DNA codes from skew cyclic codes over a non-chain extension of $\mathbb{Z}_4.$ We discuss $(\text{\textbaro },\mathfrak{d}, \gamma)$-constacyclic codes over the ring $\textfrak{R}=\mathbb{Z}_4+\omega\mathbb{Z}_4, \omega^2=\omega,$ with an $\textfrak{R}$-automorphism $\text{\textbaro }$ and a $\text{\textbaro }$-derivation $\mathfrak{d}$ over $\textfrak{R}$. By determining the generators of these codes of any arbitrary length over $\R$, we propose a construction on the $(\text{\textbaro },\mathfrak{d},\gamma)$-constacyclic codes to generate additional classical codes with improved and new parameters. Further, we demonstrate the reversibility of these codes and investigate the necessary and sufficient criterion to derive reverse-complement codes. Moreover, we present another construction to generate DNA codes from these reversible codes. Finally, we provide numerous $(\text{\textbaro },\mathfrak{d}, \gamma)$ constacyclic codes and applying the established results, we construct reversible and DNA codes. The parameters of these linear codes over $\mathbb{Z}_4$ are better and optimal according to the available online database of the $\mathbb{Z}_4$ codes.
\end{abstract}
\maketitle

\section{Introduction}
In the mid-$20^{th}$ century, coding theory emerged as a response to the critical need for robust and reliable data transmission systems. The pioneering works of Shannon in the late 1940s laid the groundwork by establishing the mathematical framework for error-correcting codes \cite{Mac}. The initial research was focused on obtaining codes over finite fields. Further, in $1972$, Blake \cite{blake} first studied the codes over certain rings.  After a long gap, in $1994$, the work of Hammons et al. \cite{hammon} attracted researchers to study codes over finite rings \cite{ CodeRing, Shi2017}.
At the same time, a new computing area emerged where Adleman showed a way to solve computationally difficult problems through DNA molecules \cite{adleman}. However, DNA molecules are subject to errors, making error-correcting mechanisms essential in DNA computing \cite{MKGupta}. Researchers have considered combinatorial approaches, algebraic methods, and a few other approaches to solve this problem \cite{Bennenni,guenda,NKumar,Oztas2017, Patel, Yadav}. With the similarity of the number of bases in DNA and elements in $\mathbb{F}_4$ and $\mathbb{Z}_4$, it becomes a suitable choice for the alphabets to construct a DNA code. Reversible code was introduced by Massey in 1964, and it paved the way for DNA code construction \cite{Massey}. Following this, Abualrub and Oehmke \cite{Abualrub1} studied DNA codes over $\mathbb{Z}_4$ of length $2^e$ in 2003. Further, in 2005, Gaborit and King \cite{Gaborit} gave construction for DNA codes and studied the GC-content of the codes. In 2006, Abualrub et al. \cite{Abualrub2} provided a construction for DNA code from cyclic codes over the field $\mathbb{F}_4$. Recently, several works have appeared where cyclic codes over fields and rings were considered for DNA code construction. Further, Bayram et al. \cite{Bayram} explored the codes over $\mathbb{F}_4+v\mathbb{F}_4$ and obtained DNA codes from it in 2016. In 2017, Bennenni et al. \cite{Bennenni} studied DNA cyclic codes over a different ring.
In the quest to get more possibilities for the generators of a code, the study has been extended to the skew polynomial rings \cite{ Boucher07, Boucher09,ore,AS-consta,siap11skew} and their potential in obtaining DNA codes \cite{DNA_256,gursoy,Ashutosh}. This direction is further enriched by the studies of researchers such as Boulagouaz and Leroy \cite{leroy}, Boucher and Ulmer \cite{Boucher14} Sharma and Bhaintwal \cite{Sharma}, Patel and Prakash \cite{shikha2}, and Ashutosh et al. \cite{IJB} where they have taken derivation with associated automorphism to find a generator of the codes.

Here, we adopt an algebraic approach in a more general setup to tackle the error-correcting problem.
This paper first introduces the skew constacyclic codes (SCC codes) with derivations over an extension of $\ziv$. We consider the non-chain ring $\textfrak{R}=\mathbb{Z}_4+\omega\mathbb{Z}_4, \omega^2=\omega$ and explore $\tdl$-constacyclic codes ($\tdl$-CC codes) over this ring.
The novelty of this work is that we investigate the reverse constraint (R-constraint) and reverse-complement constraint (RC-constraint) for a $\tdl$-CC code over some ring. In this way, we obtain several optimal \cite{z4codes} and new linear codes and DNA codes over $\mathbb{Z}_4$ using our proposed theory. Besides the conventional ways, we propose two new constructions to obtain classical and DNA codes. For computational purposes, we use Sagemath \cite{sage} and Magma computer algebra system \cite{magma} to find the factors and Lee distances for the codes.

This article is organized as follows: Section 2 presents the ring structure and subsequent results on its skew polynomial ring. Section 3 contains some basic definitions and results on $\tdl$-CC codes. In Section 4, we explore R- and RC- constraints on the codes obtained in Section 3. Using the automorphism on the ring $\R$ to deal with the reversibility problem, we establish results to obtain R- and RC-codes. We also introduce two new constructions on the codes to generate additional classical codes with optimal and new parameters and DNA codes. In Section 5, we provide examples and tables to highlight the importance of the study. Finally, Section 6 concludes the article by summarizing our results.

\section{Preliminaries}
Throughout the article, $\R$ denotes the ring $\Z= \{\e \mid a,b \in \ziv \}$ where $\omega^2=\omega$. This ring $\R$ has total $9$ ideals, all are principal with two maximal ideals $\la 1+\omega \ra$ and $\la 2+\omega \ra$. Thus, $\R$ is a commutative semi-local, principal ideal ring. It is isomorphic to the quotient ring $\Zf$.
Units of $\R$ are given by $\R^{*} = \{ \e \in \R ~|~ a, a+b \in \ziv^{*} \}$. 
A linear code of length $n$ over the ring $\R$, say $\C$, whose elements are identified as codewords in it, is characterized as an $\R$-submodule of $\R^n$.

The Lee weight over $\mathbb{Z}_4$ is a function that assigns a weight 0 to 0, 1 to 1 and 3, and 2 to 2. 
As an extension, the Lee weight on $\ziv^n$ is calculated by summing the Lee weights of its components. For the generalization of the concept of the Lee weight over $\R$, we use a Gray map $\phi: \R \rightarrow \ziv^{2}$ given by  $$\phi(\e) = (a, a+b)$$
     where $a$ and $b$ are in $\ziv$. Naturally, $\phi $ is $\ziv$-linear distance-preserving which can be generalized to $\Phi : \R^n\rightarrow \ziv^{2n}$ componentwise.
The Lee weight for an element in $\R$ is given by the Lee weight of its $\phi$-image over $\ziv \times \ziv$, i.e., $w_L(\e)= w_L(\phi(\e))$ and thus by extending componentwise, Lee weights in $\R^n$ can be determined. The Lee distance of two distinct elements of $\R^n$ is the Lee weight of their difference. The minimum of the Lee distances of distinct codewords in a code $\C$ is called the Lee distance of $\C$.
  If a code $\C$ has length $n$, size $4^{k_1}2^{k_2}$, $k_1,k_2$ are non-negative integers and the minimum Lee distance $d_L$, then it is represented by $(n,4^{k_1}2^{k_2},d_{L})$. Alternatively, a code $\C$ over $\R$ can be parameterized as $[n,k_1+k_2,d_{L}]$ where $k_1+k_2$ represents the dimension of $\C$.

\begin{definition} Suppose $\R$ is a finite ring with an automorphism $\text{\textbaro }$. An additive map $\mathfrak{d} :\R\to \R$ is called a $\text{\textbaro }$-derivation if $\mathfrak{d}(\textfrak{r}_1\textfrak{r}_2)=\mathfrak{d}(\textfrak{r}_1)\textfrak{r}_2+\text{\textbaro }(\textfrak{r}_1)\mathfrak{d}(\textfrak{r}_2),$ for all $\textfrak{r}_1,\textfrak{r}_2 \in \R.$
\end{definition}

\begin{definition}
Let $\R$ be a ring equipped with an automorphism $\text{\textbaro }$ and a $\text{\textbaro }$-derivation $\mathfrak{d}$. Then 
$$\Rx = \{\textfrak{r}_{0}+\textfrak{r}_{1}x+\cdots+\textfrak{r}_{n}x^{n}\mid \textfrak{r}_{i}\in \R, 0 \leq i \leq n \}$$
forms a ring with the usual polynomial addition and the multiplication of polynomials defined under the rule $x\textfrak{\textfrak{r}}=\text{\textbaro }(\textfrak{r})x+\mathfrak{d}(\textfrak{r})$ for all $\textfrak{r} \in \R$, known as a skew polynomial ring (SPR). 
\end{definition}

\noindent We begin by introducing an automorphism $\text{\textbaro }$ of $\R$ to define the SPR $\R[x;\text{\textbaro },\mathfrak{d}]$. 
Define a map $\text{\textbaro } : \R \rightarrow \R$ as
\begin{align}\label{theta}
    \text{\textbaro }(\e) = a+(1+3\omega)b,
\end{align}
where $a,b \in \ziv.$ Clearly, $\text{\textbaro }$ is an automorphism of order $2$ of $\R$.\\
Again, we take a map $\mathfrak{d} : \R \rightarrow \R$ given as
\begin{align}
   \mathfrak{d}(\e) = \alpha\big( \text{\textbaro }(\e)-(\e) \big)= \alpha (1+2\omega)b
\end{align}
such that $\text{\textbaro }(\alpha)+\alpha=0.$ \\
 Clearly, the map $\mathfrak{d}$ on $\R$, defined as above, is a $\text{\textbaro }$-derivation on it and the nonzero values of $\alpha$ are $2, 1+2\omega$ and $3+2\omega$.
We readily have the following results from the definitions of $\text{\textbaro }$ and $\mathfrak{d} $. 

\begin{lemma}\label{tdcom}
    Let $\textfrak{r}$ be an element in $\R$. Then 
    \begin{enumerate}
        \item $\mathfrak{d}^{n}(\textfrak{r})=0$ for $n>1$.
        \item  $\mathfrak{d}(\text{\textbaro }(\textfrak{r}))+\text{\textbaro }(\mathfrak{d}(\textfrak{r}))=0$.
    \end{enumerate}
\end{lemma}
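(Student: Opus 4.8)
The plan is to exploit the structural identity $\mathfrak{d}(\textfrak{r}) = \alpha\big(\text{\textbaro }(\textfrak{r}) - \textfrak{r}\big)$ together with the two facts already built into the definitions, namely that $\text{\textbaro }$ has order $2$ and that $\alpha$ is chosen so that $\text{\textbaro }(\alpha) + \alpha = 0$, i.e. $\text{\textbaro }(\alpha) = -\alpha$. This lets me sidestep a brute-force case analysis over the admissible values of $\alpha$, though I would keep the explicit formula $\mathfrak{d}(\e) = \alpha(1+2\omega)b$ in reserve as a sanity check.

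First I would record the single observation on which both parts rest: the image of $\mathfrak{d}$ is fixed pointwise by $\text{\textbaro }$. Concretely, for each nonzero admissible $\alpha$ the product $\alpha(1+2\omega)$ reduces to an element of $\ziv$ (one checks $2(1+2\omega)=2$, $(1+2\omega)^2 = 1$, and $(3+2\omega)(1+2\omega)=3$ modulo $4$), so $\mathfrak{d}(\e) = \alpha(1+2\omega)b$ lies in the subring $\ziv \subseteq \R$. Since $\text{\textbaro }$ fixes $\ziv$ pointwise (it sends $a \mapsto a$ when $b=0$), I obtain $\text{\textbaro }\big(\mathfrak{d}(\textfrak{r})\big) = \mathfrak{d}(\textfrak{r})$ for every $\textfrak{r} \in \R$.

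For part (2), I would compute $\mathfrak{d}(\text{\textbaro }(\textfrak{r}))$ directly from the identity: $\mathfrak{d}(\text{\textbaro }(\textfrak{r})) = \alpha\big(\text{\textbaro }^2(\textfrak{r}) - \text{\textbaro }(\textfrak{r})\big) = \alpha\big(\textfrak{r} - \text{\textbaro }(\textfrak{r})\big) = -\mathfrak{d}(\textfrak{r})$, where I use $\text{\textbaro }^2 = \mathrm{id}$. Combining this with the observation of the previous paragraph yields $\mathfrak{d}(\text{\textbaro }(\textfrak{r})) + \text{\textbaro }(\mathfrak{d}(\textfrak{r})) = -\mathfrak{d}(\textfrak{r}) + \mathfrak{d}(\textfrak{r}) = 0$, as required.

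For part (1) it suffices to prove $\mathfrak{d}^2 = 0$, since then $\mathfrak{d}^n = \mathfrak{d}^{n-2}\circ\mathfrak{d}^2 = 0$ for all $n > 1$. Writing $\mathfrak{d}^2(\textfrak{r}) = \mathfrak{d}\big(\mathfrak{d}(\textfrak{r})\big) = \alpha\big(\text{\textbaro }(\mathfrak{d}(\textfrak{r})) - \mathfrak{d}(\textfrak{r})\big)$ and invoking $\text{\textbaro }(\mathfrak{d}(\textfrak{r})) = \mathfrak{d}(\textfrak{r})$ collapses the parenthesis to zero. I do not anticipate a genuine obstacle here; the only point needing care is confirming that $\mathfrak{d}$ indeed lands in $\ziv$ for every admissible $\alpha$, since that one fact drives both statements, and in using $\text{\textbaro }(\alpha)=-\alpha$ consistently with $\text{\textbaro }$ being a ring homomorphism.
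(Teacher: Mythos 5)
Your proof is correct; the paper itself offers no argument here (it simply asserts that both identities ``readily'' follow from the definitions of \text{\textbaro} and $\mathfrak{d}$), and your direct verification from those definitions is exactly the intended route. The key observation you isolate --- that $\alpha(1+2\omega)\in\mathbb{Z}_4$ for every admissible $\alpha$, so the image of $\mathfrak{d}$ lies in the subring fixed pointwise by \text{\textbaro} --- cleanly yields both parts and is checked correctly.
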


\noindent Clearly, the multiplication over $\R$ is noncommutative as $x\textfrak{r}=\text{\textbaro }(\textfrak{r})x+\mathfrak{d}(\textfrak{r}) \neq \textfrak{r}x$. Therefore, we generalize it by the induction hypothesis, which is useful in the study of R-codes over $\R.$
\begin{lemma}\label{Gen:xnr}
      Let $\textfrak{r} \in \R$. Then
    $ x^n\textfrak{r}= \begin{cases}
    \textfrak{r}x^n, & \textit{if n is even,}\\
        (\text{\textbaro }(\textfrak{r})x+\mathfrak{d}(\textfrak{r}))x^{n-1}, &\textit{if n is odd.}
        \end{cases} $\\
        In particular, $x^2\textfrak{r}=\textfrak{r}x^2 $ for all $\textfrak{r} \in \R$.
\end{lemma}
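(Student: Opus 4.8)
The plan is to prove the special case $x^2\textfrak{r}=\textfrak{r}x^2$ first, because this single identity---the centrality of $x^2$---immediately forces both branches of the general formula. To establish it, I would expand $x^2\textfrak{r}=x(x\textfrak{r})$ by applying the defining commutation rule $x\textfrak{r}=\text{\textbaro }(\textfrak{r})x+\mathfrak{d}(\textfrak{r})$ twice. The first application gives $x\bigl(\text{\textbaro }(\textfrak{r})x+\mathfrak{d}(\textfrak{r})\bigr)$, and re-applying the rule to both $x\,\text{\textbaro }(\textfrak{r})$ and $x\,\mathfrak{d}(\textfrak{r})$ produces four terms
\[
x^2\textfrak{r}=\text{\textbaro }^2(\textfrak{r})x^2+\mathfrak{d}(\text{\textbaro }(\textfrak{r}))x+\text{\textbaro }(\mathfrak{d}(\textfrak{r}))x+\mathfrak{d}^2(\textfrak{r}).
\]
Three structural facts then collapse this expression: since $\text{\textbaro }$ is an automorphism of order $2$ we have $\text{\textbaro }^2(\textfrak{r})=\textfrak{r}$; Lemma \ref{tdcom}(1) gives $\mathfrak{d}^2(\textfrak{r})=0$; and Lemma \ref{tdcom}(2) gives $\mathfrak{d}(\text{\textbaro }(\textfrak{r}))+\text{\textbaro }(\mathfrak{d}(\textfrak{r}))=0$, so the two middle terms cancel. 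What survives is exactly $\textfrak{r}x^2$, proving the ``in particular'' assertion.

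With $x^2$ shown to be central, both cases of the main formula follow by pushing $\textfrak{r}$ through the factors of $x^n$. For even $n=2m$, I would write $x^n=(x^2)^m$ and commute $\textfrak{r}$ past each central factor $x^2$ in turn, obtaining $x^n\textfrak{r}=\textfrak{r}x^n$. For odd $n=2m+1$, I would factor $x^n=x^{2m}\cdot x$ and first apply the base rule to move $\textfrak{r}$ past the single $x$, giving $x^{2m}\bigl(\text{\textbaro }(\textfrak{r})x+\mathfrak{d}(\textfrak{r})\bigr)$; since $x^{2m}$ is central by the even case just handled, it commutes with the coefficients $\text{\textbaro }(\textfrak{r})$ and $\mathfrak{d}(\textfrak{r})$, yielding $\bigl(\text{\textbaro }(\textfrak{r})x+\mathfrak{d}(\textfrak{r})\bigr)x^{2m}=\bigl(\text{\textbaro }(\textfrak{r})x+\mathfrak{d}(\textfrak{r})\bigr)x^{n-1}$, as claimed.

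To make the argument fully rigorous I would phrase the even case as an induction on $m$, with base $m=0$ trivial and the inductive step $x^{2m}\textfrak{r}=x^{2(m-1)}(x^2\textfrak{r})=x^{2(m-1)}\textfrak{r}x^2$; the odd case is then a one-line corollary. I do not expect any genuine obstacle here: essentially all of the content resides in the four-term expansion of $x^2\textfrak{r}$, and the only point requiring care is the precise invocation of the order-$2$ property $\text{\textbaro }^2=\mathrm{id}$ together with the anticommutation relation of Lemma \ref{tdcom}(2), which are exactly the two cancellations that make $x^2$ central. Everything downstream is routine bookkeeping driven by that centrality.
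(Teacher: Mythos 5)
Your proposal is correct and follows essentially the same route as the paper: expand $x^2\textfrak{r}$ into four terms, cancel via $\text{\textbaro}^2=\mathrm{id}$, $\mathfrak{d}^2=0$, and Lemma \ref{tdcom}(2) to get centrality of $x^2$, then iterate for even $n$ and peel off one $x$ for odd $n$. The only cosmetic difference is that you detach the single $x$ on the right ($x^{2m}\cdot x$) while the paper detaches it on the left ($x\cdot x^{n-1}$); both reduce to the same two facts and are equally valid.
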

\begin{proof}
    Since $x\textfrak{r}=\text{\textbaro }(\textfrak{r})x + \mathfrak{d}(\textfrak{r}),$ by Lemma \ref{tdcom}, we get
    \begin{align}\label{$x^2r$}
        x^2\textfrak{r}=& x(\text{\textbaro }(\textfrak{r})x+\mathfrak{d}(\textfrak{r}))= \text{\textbaro }^2(\textfrak{r})x^2+\mathfrak{d}(\text{\textbaro }(\textfrak{r}))x + \text{\textbaro }(\mathfrak{d}(\textfrak{r}))x + \mathfrak{d}^2(\textfrak{r})
        = \textfrak{r}x^2.
    \end{align}
    Now, if $n$ is even, from successive use of (\ref{$x^2r$}), we get
    $$x^n\textfrak{r}=x^{n-2}(x^2\textfrak{r})= x^{n-2}(\textfrak{r}x^2)= x^{n-4}(x^2\textfrak{r})x^2 = x^{n-4}\textfrak{r}x^4 = \cdots = x^2\textfrak{r} x^{n-2} = \textfrak{r} x^n.$$
If $n$ is odd, $n-1$ is even and then
$$x^n\textfrak{r}=x(x^{n-1}\textfrak{r})=(x\textfrak{r})x^{n-1} = (\text{\textbaro }(\textfrak{r})x+\mathfrak{d}(\textfrak{r}))x^{n-1}.$$
    Hence the result.
\end{proof}
In the ring $\Rx,$ the right division of polynomials is followed by the right division algorithm \cite{Sharma} stated in the next theorem.  
\begin{theorem}\label{divalgo}
Let $\textfrak{f}(x)$ be a polynomial in $\Rx$. Then for a polynomial $\textfrak{h}(x)$ having unit leading coefficient over $\R$, there exist $\textfrak{q}(x)$ and $\textfrak{r}(x)$over $\R$ satisfying
$$\textfrak{f}(x) = \textfrak{q}(x)\textfrak{h}(x)+\textfrak{r}(x)$$
where $\textfrak{r}(x)=0$ or $\deg \textfrak{r}(x)<\deg \textfrak{h}(x)$.
\end{theorem}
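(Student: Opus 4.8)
The plan is to prove this by strong induction on $n = \deg \textfrak{f}(x)$, mirroring the ordinary Euclidean division algorithm, with the twist by $\text{\textbaro}$ as the only genuinely new ingredient. Write $\textfrak{h}(x) = u x^m + (\text{lower order terms})$, where $m = \deg \textfrak{h}(x)$ and $u \in \R^{*}$ is the unit leading coefficient. The base case is immediate: if $\textfrak{f}(x) = 0$ or $\deg \textfrak{f}(x) < m$, then $\textfrak{q}(x) = 0$ and $\textfrak{r}(x) = \textfrak{f}(x)$ already satisfy the conclusion.

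For the inductive step, suppose $n = \deg \textfrak{f}(x) \geq m$ and let $\textfrak{f}_{n}$ be the leading coefficient of $\textfrak{f}(x)$. The key computational input is how left multiplication by a monomial acts on the top-degree term: by Lemma \ref{Gen:xnr} we have $x^{n-m} u = \text{\textbaro}^{n-m}(u)\,x^{n-m} + (\text{lower order terms})$, so that for any $c \in \R$ the product $c\,x^{n-m}\,\textfrak{h}(x)$ has degree $n$ with leading coefficient $c\,\text{\textbaro}^{n-m}(u)$. Since $\text{\textbaro}$ is an automorphism and $u$ is a unit, $\text{\textbaro}^{n-m}(u) \in \R^{*}$ (indeed, as $\text{\textbaro}^{2}=\mathrm{id}$, it equals either $u$ or $\text{\textbaro}(u)$, each a unit). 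Hence I can choose $c = \textfrak{f}_{n}\big(\text{\textbaro}^{n-m}(u)\big)^{-1}$ so that $c\,x^{n-m}\,\textfrak{h}(x)$ has the same leading term as $\textfrak{f}(x)$. Then $\textfrak{f}(x) - c\,x^{n-m}\,\textfrak{h}(x)$ has degree strictly less than $n$, and applying the induction hypothesis to it produces $\textfrak{q}_{1}(x)$ and $\textfrak{r}(x)$ with $\textfrak{r}(x)=0$ or $\deg \textfrak{r}(x) < m$; setting $\textfrak{q}(x) = c\,x^{n-m} + \textfrak{q}_{1}(x)$ completes the step.

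The delicate point — and the one place where the unit hypothesis on the leading coefficient of $\textfrak{h}(x)$ is indispensable — is the invertibility of $\text{\textbaro}^{n-m}(u)$. Because $\R$ is only a commutative semi-local principal ideal ring and not a domain, one cannot in general divide by an arbitrary nonzero leading coefficient; the entire reduction hinges on $u$ being a unit, a property preserved under $\text{\textbaro}$ and hence under $\text{\textbaro}^{n-m}$. I would also note that $\deg\big(c\,x^{n-m}\,\textfrak{h}(x)\big)$ is \emph{exactly} $n$, which guarantees genuine cancellation of the top coefficient and ensures the induction strictly decreases $\deg \textfrak{f}(x)$; this too is automatic here, since the relevant leading coefficients multiply to a unit. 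As the statement asserts only existence, I would stop once $\textfrak{q}(x)$ and $\textfrak{r}(x)$ have been exhibited.
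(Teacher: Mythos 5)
Your proof is correct. The paper itself supplies no proof of this theorem---it only points to Sharma and Bhaintwal \cite{Sharma}---and your degree induction is precisely the standard argument behind the cited result: cancel the leading term of $\textfrak{f}(x)$ against $c\,x^{n-m}\textfrak{h}(x)$, using that $x^{n-m}u$ has leading coefficient $\text{\textbaro }^{n-m}(u)$, still a unit since $\text{\textbaro }$ is an automorphism. One small wording slip: the leading coefficients of $c\,x^{n-m}$ and $\textfrak{h}(x)$ multiply to $\textfrak{f}_{n}$, which is nonzero but need not be a unit; what actually guarantees exact cancellation and the strict degree drop (and what your choice of $c$ correctly exploits) is the invertibility of $\text{\textbaro }^{n-m}(u)$.
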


\section{$(\text{\textbaro  },\mathfrak{d},\gamma)$-CC codes over $\R$ }
The class of CC codes is an impressive extension of cyclic codes which holds significant importance in coding theory. This section studies SCC codes with derivations over the ring $\R$.

\begin{definition}
    Let $\gamma \in \R^{*}$. Given an automorphism $\text{\textbaro }$ and a $\text{\textbaro }$-derivation $\mathfrak{d} $ of $\R$, a code $\C$ is an SCC code, denoted by $\tdl$-CC code if it is invariant under the $\tdl$-CC shift $\tau_{\tdl}:\R^n \rightarrow \R^n$  given by
$$\tau_{\tdl}\big(\cw\big)= \big(\gamma \text{\textbaro }(\textfrak{c}_{n-1})+\mathfrak{d}(\textfrak{c}_0), \text{\textbaro }(\textfrak{c}_0)+\mathfrak{d}(\textfrak{c}_1), \dots, \text{\textbaro }(\textfrak{c}_{n-2})+\mathfrak{d}(\textfrak{c}_{n-1})\big),$$
for any $\cw \in \C$, i.e., if $\tau_{\tdl}\big(\C\big)=\C$.  Specifically, $\C$ becomes $\td$-cyclic or $\td$-negacyclic code for $\gamma =1$ and $-1$, respectively.
\end{definition}

\noindent We identify $\textfrak{R}_n $ with $\frac{\Rx}{\la \textfrak{f}(x) \ra }$, where $\textfrak{f}(x)$ is a skew polynomial of degree $n$ over $\R$. By associating an element $\textfrak{r}=(\textfrak{r}_0,\textfrak{r}_1,\dots,\textfrak{r}_{n-1})$ to its corresponding skew polynomial $\textfrak{r}(x)=\textfrak{r}_0+\textfrak{r}_1x+\cdots+\textfrak{r}_{n-1}x^{n-1}$, it can be seen that $\frac{\Rx}{\la \textfrak{f}(x) \ra}$ is a left $\Rx$-module under the multiplication given as $\textfrak{a}(x)(\textfrak{r}(x)+\la \textfrak{f}(x) \ra)=\textfrak{a}(x)\textfrak{r}(x)+ \la \textfrak{f}(x) \ra$ for any $\textfrak{a}(x) \in  \Rx$. Thus, throughout this paper, we frequently use the above identification for the codeword $\textfrak{c}$ and its corresponding skew polynomial $\textfrak{c}(x)$ interchangeably.

\begin{lemma}\label{shift}
Let $\textfrak{r}=(\textfrak{r}_0,\textfrak{r}_1,\dots,\textfrak{r}_{n-1}) \in \R^n$ have the skew polynomial representation $\textfrak{r}(x)=\textfrak{r}_0+\textfrak{r}_1x+\textfrak{r}_2x^2+\cdots+\textfrak{r}_{n-1}x^{n-1}$ in $\frac{\Rx}{\la x^n-\gamma \ra }.$ Then $x\textfrak{r}(x)$ represents the element $(\gamma \text{\textbaro }(\textfrak{r}_{n-1})+\mathfrak{d}(\textfrak{r}_0),\text{\textbaro }(\textfrak{r}_0)+\mathfrak{d}(\textfrak{r}_1),\dots,\text{\textbaro }(\textfrak{r}_{n-2})+\mathfrak{d}(\textfrak{r}_{n-1}))$ of $\R^n$.
\end{lemma}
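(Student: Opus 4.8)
The plan is to compute the product $x\textfrak{r}(x)$ directly inside $\frac{\Rx}{\la x^n-\gamma\ra}$ by expanding it through the defining commutation rule $x\textfrak{r}=\text{\textbaro }(\textfrak{r})x+\mathfrak{d}(\textfrak{r})$ and then reading off the coefficient of each power of $x$. First I would write $x\textfrak{r}(x)=\sum_{i=0}^{n-1}x\textfrak{r}_ix^i$ and apply the rule to the single leading $x$ in each summand, using associativity to get $x\textfrak{r}_ix^i=(\text{\textbaro }(\textfrak{r}_i)x+\mathfrak{d}(\textfrak{r}_i))x^i=\text{\textbaro }(\textfrak{r}_i)x^{i+1}+\mathfrak{d}(\textfrak{r}_i)x^i$. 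Since only one $x$ has to be moved past one coefficient here, the full strength of Lemma \ref{Gen:xnr} is not needed and the basic rule suffices.

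Summing over $i$ splits the product into a ``$\text{\textbaro }$-part'' $\sum_{i=0}^{n-1}\text{\textbaro }(\textfrak{r}_i)x^{i+1}$, in which every power of $x$ is raised by one, and a ``$\mathfrak{d}$-part'' $\sum_{i=0}^{n-1}\mathfrak{d}(\textfrak{r}_i)x^i$, which keeps the powers unchanged. For $1\le j\le n-1$ the coefficient of $x^j$ then receives $\text{\textbaro }(\textfrak{r}_{j-1})$ from the first sum and $\mathfrak{d}(\textfrak{r}_j)$ from the second, giving $\text{\textbaro }(\textfrak{r}_{j-1})+\mathfrak{d}(\textfrak{r}_j)$; note the $\mathfrak{d}$-part never leaves the range $0,\dots,n-1$, so no reduction is required there. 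This already matches every entry of the claimed tuple except the leading one.

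The only term that falls outside the range is $\text{\textbaro }(\textfrak{r}_{n-1})x^n$, coming from $i=n-1$ in the $\text{\textbaro }$-part, and handling it is the step that requires care. Since $\text{\textbaro }(\textfrak{r}_{n-1})$ is a constant, $\text{\textbaro }(\textfrak{r}_{n-1})(x^n-\gamma)$ is a left multiple of the generator and hence lies in the left ideal $\la x^n-\gamma\ra$; I would therefore reduce $\text{\textbaro }(\textfrak{r}_{n-1})x^n\equiv\text{\textbaro }(\textfrak{r}_{n-1})\gamma$ modulo $\la x^n-\gamma\ra$, and because $\R$ is commutative this equals $\gamma\,\text{\textbaro }(\textfrak{r}_{n-1})$. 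Adding the contribution $\mathfrak{d}(\textfrak{r}_0)$ from $i=0$ in the $\mathfrak{d}$-part then produces the constant term $\gamma\,\text{\textbaro }(\textfrak{r}_{n-1})+\mathfrak{d}(\textfrak{r}_0)$.

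Assembling the constant term with the coefficients of $x^1,\dots,x^{n-1}$ yields exactly $(\gamma\text{\textbaro }(\textfrak{r}_{n-1})+\mathfrak{d}(\textfrak{r}_0),\text{\textbaro }(\textfrak{r}_0)+\mathfrak{d}(\textfrak{r}_1),\dots,\text{\textbaro }(\textfrak{r}_{n-2})+\mathfrak{d}(\textfrak{r}_{n-1}))$, the asserted element of $\R^n$. The computation is otherwise routine; the one genuine subtlety is the wraparound reduction of $x^n$, where one must justify replacing $x^n$ by $\gamma$ on the correct side. This is legitimate precisely because the reduction takes place in the left module $\frac{\Rx}{\la x^n-\gamma\ra}$ and $\R$ is commutative, so the scalar $\gamma$ may be moved freely in front of $\text{\textbaro }(\textfrak{r}_{n-1})$.
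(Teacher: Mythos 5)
Your computation is correct and is exactly the standard argument: expand $x\textfrak{r}(x)$ term by term via $x\textfrak{r}=\text{\textbaro }(\textfrak{r})x+\mathfrak{d}(\textfrak{r})$, collect coefficients, and reduce the overflow term $\text{\textbaro }(\textfrak{r}_{n-1})x^{n}$ by the left multiple $\text{\textbaro }(\textfrak{r}_{n-1})(x^{n}-\gamma)$ of the generator. The paper itself gives no details here, merely citing Proposition 3 of an earlier work which proceeds in the same way, so your write-up supplies a complete proof of what the paper only references.
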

\begin{proof}
It follows a similar approach as given in Proposition 3 in \cite{shikha2}.
\end{proof}

\begin{theorem}
    Suppose $\C$ is a linear code over $\R$ of length $n$. The necessary and sufficient criterion for $\C$ to be $(\text{\textbaro },\mathfrak{d}, \gamma)$-CC is that the skew polynomial correspondence of $\C$ forms a left $\Rx$-submodule of $\frac{\Rx}{\la x^n-\gamma \ra }$.
\end{theorem}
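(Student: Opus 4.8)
The plan is to prove both directions of the biconditional, using the correspondence between a vector $\cw \in \R^n$ and its skew polynomial $\textfrak{c}(x) = \textfrak{c}_0 + \textfrak{c}_1 x + \cdots + \textfrak{c}_{n-1}x^{n-1}$ in $\frac{\Rx}{\la x^n - \gamma \ra}$ established above. The decisive observation, already packaged for us in Lemma \ref{shift}, is that multiplication by $x$ in the quotient ring $\frac{\Rx}{\la x^n - \gamma \ra}$ realizes exactly the $\tdl$-CC shift $\tau_{\tdl}$ on the coefficient vector. So the entire theorem reduces to translating the module condition into the shift-invariance condition and back.

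First I would prove the forward direction: assume the image of $\C$ is a left $\Rx$-submodule $M$ of $\frac{\Rx}{\la x^n - \gamma \ra}$. Take any codeword $\textfrak{c} \in \C$ with polynomial $\textfrak{c}(x) \in M$. Since $M$ is closed under left multiplication by elements of $\Rx$, in particular $x\,\textfrak{c}(x) \in M$. By Lemma \ref{shift}, the vector corresponding to $x\,\textfrak{c}(x)$ is precisely $\tau_{\tdl}(\cw)$, so $\tau_{\tdl}(\cw)$ lies back in $\C$. Hence $\tau_{\tdl}(\C) \subseteq \C$, and because $\tau_{\tdl}$ is a bijection on the finite set $\C$ (it is the restriction of left multiplication by the unit-leading-coefficient–adjacent operator $x$, which is invertible modulo $x^n - \gamma$ since $\gamma$ is a unit), equality $\tau_{\tdl}(\C) = \C$ follows, so $\C$ is a $\tdl$-CC code.

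Next I would prove the converse: assume $\C$ is $\tdl$-CC, i.e. $\tau_{\tdl}(\C) = \C$. I must show the polynomial image $M$ of $\C$ is a left $\Rx$-submodule. Since $\C$ is $\R$-linear, $M$ is closed under addition and under multiplication by constants $\textfrak{r} \in \R$, so $M$ is an $\R$-submodule. It remains to show $M$ is closed under left multiplication by $x$ (and hence, iterating, by every monomial $x^j$ and thus by every element of $\Rx$, using $\R$-linearity). For $\textfrak{c}(x) \in M$, Lemma \ref{shift} identifies $x\,\textfrak{c}(x)$ with $\tau_{\tdl}(\cw) \in \C$, so $x\,\textfrak{c}(x) \in M$. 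A clean induction on $j$ then gives $x^j\,\textfrak{c}(x) \in M$ for all $j \ge 0$, and combining with $\R$-linearity yields $\textfrak{a}(x)\textfrak{c}(x) \in M$ for an arbitrary $\textfrak{a}(x) \in \Rx$.

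The only genuinely delicate point is the noncommutativity of $\Rx$: a general element $\textfrak{a}(x) = \sum_j \textfrak{a}_j x^j$ must be multiplied on the \emph{left}, so one writes $\textfrak{a}(x)\textfrak{c}(x) = \sum_j \textfrak{a}_j\big(x^j \textfrak{c}(x)\big)$ and applies closure under $x^j$ followed by the scalar $\textfrak{a}_j$; one must be careful that the scalars sit on the correct side and that the skew relation $x\textfrak{r} = \text{\textbaro}(\textfrak{r})x + \mathfrak{d}(\textfrak{r})$ has already been absorbed into the shift formula of Lemma \ref{shift}, so no further reshuffling is needed. I expect this bookkeeping — keeping the left-module structure straight while the shift encodes $\text{\textbaro}$ and $\mathfrak{d}$ — to be the main obstacle, though it is more a matter of care than of depth, since Lemma \ref{shift} has already done the computational heavy lifting.
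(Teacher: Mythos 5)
Your proposal is correct and follows essentially the same route as the paper: Lemma \ref{shift} identifies left multiplication by $x$ with the shift $\tau_{\tdl}$, the CC-to-submodule direction is done by induction on powers of $x$ together with $\R$-linearity, and the other direction is immediate. The only difference is that you explicitly justify why $\tau_{\tdl}(\C)\subseteq\C$ upgrades to $\tau_{\tdl}(\C)=\C$ (via invertibility of $x$ modulo $x^n-\gamma$ when $\gamma$ is a unit), a point the paper passes over as ``straightforward.''
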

\begin{proof}
   If $\C$ is a linear code over $\R$ of length $n$, $\textfrak{c}_1(x)+\textfrak{c}_2(x) \in \C$, for all $\textfrak{c}_1(x),\textfrak{c}_2(x) \in \C$. Suppose $\C$ is  $(\text{\textbaro },\mathfrak{d},\gamma)$-CC. Then for a codeword $\textfrak{c} \in \C$, its $\tau_{\tdl}$-shift, i.e., $\tau_{\tdl}(\textfrak{c}) \in  \C$. Using Lemma \ref{shift}, we have $x\textfrak{c}(x) \in \C$ for each $\textfrak{c}(x)$ in $\C$.
   Following inductively, $x^i\textfrak{c}(x) \in \C ~\forall i \in \mathbb{N}$. Thus, for every $\textfrak{r}(x) \in \Rx$, we get $\textfrak{r}(x)\textfrak{c}(x) $ in $ \C.$ Thus, $\C$ becomes a left $\Rx$-submodule of $\frac{\Rx}{\la x^n-\gamma \ra }$.\\
   Converse is a straightforward implication from the definition of $\tdl$-CC code and Lemma \ref{shift}.
\end{proof}
Now, we determine the generators of $(\text{\textbaro }, \mathfrak{d}, \gamma)$-CC codes of arbitrary lengths $n$ over $\R.$ If such a code $\C$ which is an $\R$-submodule of $\R^n$, possesses a smallest degree polynomial $\textfrak{g}(x)$ having unit leading coefficient in $\R$, by Theorem \ref{divalgo}, $\C$ can be treated as a principal left $\Rx$-submodule of $\Rx/ \la x^n-\gamma \ra$ generated by $\textfrak{g}(x)$.

\pagebreak
\begin{theorem}
    Let $\C$ be a $\tdl$-CC code over $\R$ of length $n$.
    If $\textfrak{g}(x)$ is a minimum degree polynomial in $\C$ having unit leading coefficient in $\R$, then $\C= \la \textfrak{g}(x) \ra$.
    In addition, $\textfrak{g}(x) |_{r} (x^n- \gamma)$.
\end{theorem}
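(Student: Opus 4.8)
The plan is to obtain both assertions from the right division algorithm (Theorem \ref{divalgo}), using two facts: by the preceding theorem $\C$ is a left $\Rx$-submodule of $\frac{\Rx}{\la x^n-\gamma\ra}$, and $\textfrak{g}(x)$ is a nonzero element of $\C$ of least degree whose leading coefficient is a unit (so that Theorem \ref{divalgo} may legitimately use it as a divisor). I read the hypothesis as saying that $\textfrak{g}(x)$ realises the minimum degree among all nonzero codewords of $\C$.

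First I would settle $\C=\la\textfrak{g}(x)\ra$. The inclusion $\la\textfrak{g}(x)\ra\subseteq\C$ is immediate: since $\textfrak{g}(x)\in\C$ and $\C$ is closed under left multiplication by $\Rx$, every left multiple $\textfrak{q}(x)\textfrak{g}(x)$ stays in $\C$. For the reverse inclusion, take any $\textfrak{c}(x)\in\C$ and divide it on the right by $\textfrak{g}(x)$; Theorem \ref{divalgo} gives $\textfrak{c}(x)=\textfrak{q}(x)\textfrak{g}(x)+\textfrak{r}(x)$ with $\textfrak{r}(x)=0$ or $\deg\textfrak{r}(x)<\deg\textfrak{g}(x)$. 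Then $\textfrak{r}(x)=\textfrak{c}(x)-\textfrak{q}(x)\textfrak{g}(x)$ lies in $\C$, being a difference of two codewords. Minimality of $\deg\textfrak{g}(x)$ leaves no room for a nonzero codeword of smaller degree, so $\textfrak{r}(x)=0$ and $\textfrak{c}(x)=\textfrak{q}(x)\textfrak{g}(x)\in\la\textfrak{g}(x)\ra$.

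Next I would establish $\textfrak{g}(x)\mid_{r}(x^n-\gamma)$ by dividing the modulus itself: write $x^n-\gamma=\textfrak{q}(x)\textfrak{g}(x)+\textfrak{r}(x)$ with $\textfrak{r}(x)=0$ or $\deg\textfrak{r}(x)<\deg\textfrak{g}(x)$. Passing to $\frac{\Rx}{\la x^n-\gamma\ra}$ annihilates the left-hand side, so the class of $\textfrak{r}(x)$ coincides with that of $-\textfrak{q}(x)\textfrak{g}(x)\in\la\textfrak{g}(x)\ra=\C$. As $\deg\textfrak{r}(x)<\deg\textfrak{g}(x)\le n-1$, the polynomial $\textfrak{r}(x)$ is already the reduced (degree $<n$) representative of its class, hence a genuine codeword of degree below $\deg\textfrak{g}(x)$; minimality again forces $\textfrak{r}(x)=0$, giving $x^n-\gamma=\textfrak{q}(x)\textfrak{g}(x)$ and the claimed right divisibility.

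The main thing to be careful about is the vanishing of the remainder, which is where the whole argument turns: it requires $\textfrak{g}(x)$ to be a true minimum-degree nonzero codeword, not merely minimal among codewords with unit leading coefficient, for otherwise a lower-degree remainder lacking a unit leading coefficient could survive. I would also keep the noncommutativity explicit, noting that right division produces the left multiples $\textfrak{q}(x)\textfrak{g}(x)$ that are exactly the elements closed under the left-module structure of $\C$, and that the degree bookkeeping passes safely to the quotient because every remainder already has degree strictly less than $n$.
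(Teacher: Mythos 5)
Your proof is correct and follows exactly the route the paper indicates: the paper states this theorem without a written proof, remarking only that it follows from the right division algorithm (Theorem \ref{divalgo}) applied to a smallest-degree codeword with unit leading coefficient, which is precisely the argument you carry out for both the principality of $\C$ and the right divisibility of $x^n-\gamma$. Your closing caveat about needing $\textfrak{g}(x)$ to be of minimum degree among \emph{all} nonzero codewords (not merely among those with unit leading coefficient) is a genuine and worthwhile observation that the paper's phrasing glosses over.
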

Consider a right divisor $\textfrak{g}(x)=\sum_{i=0}^{m} \textfrak{g}_i x^i$  of $x^n-\gamma$, then a generator matrix $G$ of the $\tdl$-CC code $\C$ associated with $\textfrak{g}(x)$ is represented as
$$G= \begin{bmatrix}
    \textfrak{g}(x)\\
    x\textfrak{g}(x)\\
    x^2\textfrak{g}(x)\\
        \vdots\\
    x^{n-m-1}\textfrak{g}(x)
\end{bmatrix}.$$
If $ n-m $ is even, then by Lemma \ref{Gen:xnr}, we have
\tiny
$$\text{\normalsize{$G$}} = \begin{bmatrix}
    \textfrak{g}_0 & \textfrak{g}_1 & \textfrak{g}_2 & \dots & \textfrak{g}_m & 0 & 0 & \dots & 0 & 0 \\
    \mathfrak{d}(\textfrak{g}_0) & \text{\textbaro }(\textfrak{g}_0)+ \mathfrak{d}(\textfrak{g}_1) & \text{\textbaro }(\textfrak{g}_1)+ \mathfrak{d}(\textfrak{g}_2) & \dots & \text{\textbaro }(\textfrak{g}_{m-1})+\mathfrak{d}(\textfrak{g}_m) & \text{\textbaro }(\textfrak{g}_m) & 0 & \dots & 0 & 0 \\
    0 & 0 & \textfrak{g}_0 & \dots & \textfrak{g}_{m-2} & \textfrak{g}_{m-1} & \textfrak{g}_m & \dots & 0 & 0\\
    \vdots & \vdots & \vdots & \ddots & \vdots & \vdots & \vdots & \ddots & \vdots & \vdots \\
    0 & 0 & \dots & \dots & \dots & \dots & \dots & \dots & \text{\textbaro }(\textfrak{g}_{m-1})+\mathfrak{d}(\textfrak{g}_m) & \text{\textbaro }(\textfrak{g}_m)
\end{bmatrix}.$$
\normalsize
Similarly, if $ n-m $ is odd, then
\scriptsize
$$\text{\normalsize{$G$}} = {\begin{bmatrix}
    \textfrak{g}_0 & \textfrak{g}_1 & \textfrak{g}_2 & \dots & \textfrak{g}_m & 0 & 0 & \dots & 0 & 0 \\
    \mathfrak{d}(\textfrak{g}_0) & \text{\textbaro }(\textfrak{g}_0)+ \mathfrak{d}(\textfrak{g}_1) & \text{\textbaro }(\textfrak{g}_1)+ \mathfrak{d}(\textfrak{g}_2) & \dots & \text{\textbaro }(\textfrak{g}_{m-1})+\mathfrak{d}(\textfrak{g}_m) & \text{\textbaro }(\textfrak{g}_m) & 0 & \dots & 0 & 0 \\
    0 & 0 & \textfrak{g}_0 & \dots & \textfrak{g}_{m-2} & \textfrak{g}_{m-1} & \textfrak{g}_m & \dots & 0 & 0\\
    \vdots & \vdots & \vdots & \ddots & \vdots & \vdots & \vdots & \ddots & \vdots & \vdots \\
    0 & 0 & \dots & \dots & \dots & \dots & \dots & \dots & \textfrak{g}_{m-1} & \textfrak{g}_m
\end{bmatrix}}.$$
\normalsize
It is evident that there is linear independence between the rows of $G$. Thus, the set $\{ \textfrak{g}(x),x\textfrak{g}(x),$ $x^2\textfrak{g}(x),$ $\dots, x^{n-m-1}\textfrak{g}(x) \}$ constitutes a basis for $\C$ where $m= \deg(\textfrak{g}(x)).$ Thus, we state the subsequent proposition.
\begin{proposition}
    Let $\textfrak{g}(x) |_r x^n-\gamma$ with a unit leading coefficient. Then the $\tdl$-CC code $\C= \la \textfrak{g}(x) \ra$ is a free $\R$-module where $|\C|=|\R^{n-\deg(\textfrak{g}(x))}|.$
\end{proposition}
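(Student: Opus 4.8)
The plan is to exhibit an explicit $\R$-basis for $\C$ of size $n-m$, where $m=\deg(\textfrak{g}(x))$, and then read off the cardinality from freeness. The natural candidate is the set $B=\{\textfrak{g}(x), x\textfrak{g}(x),\dots,x^{n-m-1}\textfrak{g}(x)\}$ of rows of the generator matrix $G$ displayed just above the statement, so the two things to verify are that $B$ spans $\C$ over $\R$ and that it is $\R$-linearly independent; freeness and the count then follow.

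For spanning, I would first use the hypothesis $\textfrak{g}(x)\mid_r (x^n-\gamma)$ to write $x^n-\gamma=\textfrak{h}(x)\textfrak{g}(x)$ with $\deg\textfrak{h}(x)=n-m$. Comparing leading coefficients and using that $\text{\textbaro }$ is an automorphism (so $\text{\textbaro }^{\,k}$ carries units to units), the leading coefficient of $\textfrak{h}(x)$ is forced to be a unit, whence Theorem \ref{divalgo} applies with divisor $\textfrak{h}(x)$. Now take any codeword $\textfrak{c}(x)\in\C=\la\textfrak{g}(x)\ra$, say $\textfrak{c}(x)\equiv \textfrak{a}(x)\textfrak{g}(x)\pmod{x^n-\gamma}$. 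Right-dividing $\textfrak{a}(x)$ by $\textfrak{h}(x)$ gives $\textfrak{a}(x)=\textfrak{q}(x)\textfrak{h}(x)+\textfrak{a}'(x)$ with $\deg\textfrak{a}'(x)<n-m$, so $\textfrak{a}(x)\textfrak{g}(x)=\textfrak{q}(x)(x^n-\gamma)+\textfrak{a}'(x)\textfrak{g}(x)\equiv \textfrak{a}'(x)\textfrak{g}(x)$. Since $\textfrak{g}(x)$ has unit leading coefficient, degree is additive under right multiplication by it, so $\deg(\textfrak{a}'(x)\textfrak{g}(x))=\deg\textfrak{a}'(x)+m<n$; thus $\textfrak{a}'(x)\textfrak{g}(x)$ is already the canonical degree-$<n$ representative and $\textfrak{c}(x)=\textfrak{a}'(x)\textfrak{g}(x)$. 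Writing $\textfrak{a}'(x)=\sum_{i=0}^{n-m-1}\textfrak{a}'_i x^i$ expresses $\textfrak{c}(x)$ as an $\R$-combination of $B$, proving that $B$ spans.

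For linear independence I would exploit the staircase shape of $G$: the row $x^i\textfrak{g}(x)$ has degree exactly $m+i$, and by Lemma \ref{Gen:xnr} its leading coefficient is $\textfrak{g}_m$ (when $i$ is even) or $\text{\textbaro }(\textfrak{g}_m)$ (when $i$ is odd), both units. Hence the rightmost nonzero entries of the rows occupy the distinct columns $m,m+1,\dots,n-1$ and are all units, so $G$ is, up to column ordering, triangular with unit pivots. Reading a relation $\sum_{i=0}^{n-m-1}\textfrak{r}_i\,x^i\textfrak{g}(x)=0$ from the highest column $n-1$ downward forces $\textfrak{r}_{n-m-1}=\cdots=\textfrak{r}_0=0$ successively (equivalently, additivity of degree under right multiplication by $\textfrak{g}(x)$ shows $\textfrak{a}'(x)\textfrak{g}(x)=0$ implies $\textfrak{a}'(x)=0$). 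Together with the spanning step, $B$ is therefore an $\R$-basis and $\C$ is a free $\R$-module of rank $n-m$.

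Finally, freeness yields $|\C|=|\R|^{\,n-m}=|\R^{\,n-\deg(\textfrak{g}(x))}|$, as claimed. I expect the only genuinely delicate point to be the degree bookkeeping in the noncommutative setting: one must repeatedly invoke that $\textfrak{g}(x)$ and $\textfrak{h}(x)$ carry unit leading coefficients so that $\deg(\textfrak{p}(x)\textfrak{g}(x))=\deg\textfrak{p}(x)+m$ and the right-division algorithm is legitimate. The skew multiplication rule of Lemma \ref{Gen:xnr} is otherwise harmless here, since it only perturbs lower-order terms and the automorphic twist of a unit is again a unit.
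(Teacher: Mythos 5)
Your proposal is correct and follows essentially the same route as the paper: the paper establishes the proposition by displaying the generator matrix with rows $\textfrak{g}(x), x\textfrak{g}(x),\dots,x^{n-m-1}\textfrak{g}(x)$ and observing from its staircase shape with unit pivots that these rows form an $\R$-basis. Your write-up merely adds the spanning argument via right division by $\textfrak{h}(x)$ (where $x^n-\gamma=\textfrak{h}(x)\textfrak{g}(x)$), a detail the paper leaves implicit, so there is nothing to correct.
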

Now, we introduce an unconventional way to construct linear codes. Using this, we obtain numerous codes over $\ziv$ presented in Tables \ref{tdcyclic} and \ref{CC_Tbl2}.\\

\noindent\textbf{Construction 1:} Let $\textfrak{C}$ be a $\td$-CC code generated by the matrix $G$ over $\textfrak{R}$. Consider $G=M_1+\omega M_2$ where $M_1$ and $M_2$ are matrices over $\mathbb{Z}_4$. Now, we construct a generator matrix $G_M=\begin{bmatrix}
        aM_1&bM_1\\
        cM_2&dM_2
    \end{bmatrix}$, then the code $\textfrak{C}_M$ generated by $G_M$ is a linear code over $\mathbb{Z}_4$. Further, for computational work, we use three different matrices ($\begin{bmatrix}
        a&b\\
        c&d
    \end{bmatrix}=\begin{bmatrix}
        1&0\\
        1&1
    \end{bmatrix}, \begin{bmatrix}
        1&1\\
        0&1
    \end{bmatrix} \text{ and } \begin{bmatrix}
        2&1\\
        3&2
    \end{bmatrix}$ assigned by $N_1, N_2$ and $N_3,$ respectively) to find the codes and their Lee distance.

\begin{algorithm}
\caption{To obtain classical codes by construction 1:}
\begin{algorithmic}
\renewcommand{\algorithmicrequire}{\textbf{Input:}}
\renewcommand{\algorithmicensure}
{\textbf{Output:}}

\Require Ring $\Rx$, matrix $N=\begin{bmatrix}
   a & b \\ c & d
  \end{bmatrix}$, length $n$ and $\gamma$.\\
\textbf{Step 1:} Find the right divisor $g(x)$ of $x^n-\gamma$ in $\Rx$.\\
\textbf{Step 2:} Construct the generator matrix $G$ for the polynomial $g(x)$ over $\R$.\\
\textbf{Step 3:} Find the matrices $M_1$ and $M_2$ over $\ziv$ such that $G=M_1+\omega M_2$.\\
\textbf{Step 4:} Construct a matrix $G_M$ such that $G_M=\begin{bmatrix} aM_1 & bM_1 \\ cM_2 & dM_2 \end{bmatrix}$.\\
\textbf{Step 5:} $\mathfrak{C}:= \text{LinearCode}(G_M);$
\Ensure Parameters($\mathfrak{C}$).
\end{algorithmic}
\end{algorithm}

\section{DNA code construction over $\R$ }
This section presents an algebraic methodology for constructing DNA codes over $\R$. We investigate the R- and RC-constraints for a $\tdl$-CC code $\C$ over $\R$. Note that for the elements of $\ziv$, we use the correspondence with the DNA bases which maps $0$ to $A$, $1$ to $T$, $2$ to $C$ and $3$ to $G$. Here, the DNA nucleotides follow the Watson-Crick complement rule, i.e., the nucleotides $A$ and $T$ are complements of each other, and $C$ and $G$ are complements of each other, symbolically,  $A^c=T$, $T^c=A$, $C^c=G$ and $G^c=C$.
\begin{definition}
Suppose $\textfrak{D}$ is a linear code of length $n$ such that the coordinates of its codewords are from the set of DNA bases $\{A,T,G,C\}$.
We define the primary DNA constraints as follows.
\begin{enumerate}
    \item R-constraint: If for a codeword $x=(x_0,  x_1, \dots, x_{n-1} )$ of $\textfrak{D}$, its reverse given as $x^r=(x_{n-1}, x_{n-2},$ $ \dots, x_0)$ is again a codeword of $\textfrak{D}$, then $\textfrak{D}$ is called an R-code.
    \item RC-constraint: If for a codeword $x=(x_0,  x_1, \dots, x_{n-1} )$ of $\textfrak{D}$, its reverse-complement given as  $x^{rc}= (x_{n-1}^c,\dots, x_{1}^c,x_0^c)$ is again a codeword of $\textfrak{D}$, then $\textfrak{D}$ is called an RC-code.
    \end{enumerate}
    We call $\textfrak{D}$ a DNA code of length $n$ if it satisfies above the constraints.
\end{definition}
Since the order of $\R$ is an exponent of $4$, it is natural to search for a correspondence between the codewords over $\R$ and the sequences of a DNA code. In light of the Gray map $\phi$ which is defined as
    $\phi(\e) \mapsto (a,a+b), $ where $a, b \in \R,$
 we form a one-to-one correspondence shown in Table \ref{DNAcorres.}, between the elements of $\R$ and the DNA pairs. 

\renewcommand{\arraystretch}{1.1}
\begin{table}[ht]
\scriptsize
\centering
    \begin{tabular}{|c|c|c|c|c|c|}
\hline
 Elements of $\R$ & Gray images  & Corresponding & Element of $\R$ & Gray images  & Corresponding \\
    $a+\omega b$ & $\phi(a+\omega b)$  & DNA pairs &   $a+\omega b$ & $\phi(a+\omega b)$  & DNA pairs  \\
    \hline
$0$ & $(0,0)$  & $AA$ & $\omega$  & $(0,1)$  & $AT$ \\
$1$ & $(1,1)$  & $TT$ & $1+\omega$ & $(1,2)$  & $TC$ \\
$2$ & $(2,2)$  & $CC$ & $2+\omega$ & $(2,3)$  & $CG$ \\
$3$ & $(3,3)$  & $GG$ & $3+\omega$ & $(3,0)$  & $GA$ \\
$2\omega$ & $(0,2)$  &  $AC$ & $3\omega$  & $(0,3)$  & $AG$ \\
$1+2\omega$ & $(1,3)$  & $TG$ & $1+3\omega$ & $(1,0)$  & $TA$ \\
$2+2\omega$ & $(2,0)$  & $CA$ & $2+3\omega$ & $(2,1)$  & $CT$ \\
$3+2\omega$ & $(3,1)$  & $GT$ & $3+3\omega$ & $(3,2)$  & $GC$ \\
\hline
    \end{tabular}
    \caption{DNA correspondence with respect to the Gray map $a+\omega b \mapsto (a,a+b)$}
    \label{DNAcorres.}
    \end{table}
\noindent Since we are considering DNA codes with coordinates of every codeword from $\R$, each component in the corresponding DNA sequence is a pair of DNA bases instead of a single DNA alphabet given in Table \ref{DNAcorres.}. Thus, while taking the reverse of a codeword, a reversibility problem is encountered.\\
To understand the reversibility problem, we consider the $\td$-cyclic code $\C= \la x^6 + (2\omega + 2)x^5 + x^4 + x^2 + (2\omega + 2)x + 1\ra$ of length $8$. In the reversible code $\Phi(\C)$, $GCGAGCTAGCGA$ $GCTA$ is a DNA sequence corresponding to the codeword $(3+3\omega)+(3+\omega)x+(3+3\omega)x^2+(1+3\omega)x^3+(3+3\omega)x^4+(3+\omega)x^5+(3+3\omega)x^6+(1+3\omega)x^7 \in \C$. The reverse of this codeword over $\R$ is $ (1+3\omega)+(3+3\omega)x+(3+\omega)x^2+(3+3\omega)x^3+(1+3\omega)x^4+(3+3\omega)x^5+(3+\omega)x^6+(3+3\omega)x^7$ which corresponds to the DNA sequence $TAGCGAGCTAGCGAGC$ but the reverse of $GCGAGCTAGCGAGCTA$ is $ATCGAGCGATCGAGCG$, i.e., the reversibility of codewords of a code over $\R$ does not assure the reversibility of the DNA sequences of the corresponding DNA code. To resolve the reversibility problem that occurs due to DNA pairs, we present a technique that contemplates the reverse of the Gray image of the codeword over $\ziv$.\\
Here, we incorporate the automorphism used to define the skewness of the ring, which we found an efficient way to resolve the reversibility problem.
We see that the reverse of the DNA correspondence of a codeword $\textfrak{c}=(\textfrak{c}_0,\textfrak{c}_1,\dots,\textfrak{c}_{n-1})$ with $\textfrak{c}_i$'s are in $\R$ corresponds to $(\text{\textbaro }(\textfrak{c}_{n-1}),\dots,\text{\textbaro }(\textfrak{c}_1),\text{\textbaro }(\textfrak{c}_0))$. We further denote it by $\textfrak{c}^R=(\textfrak{c}_{n-1}^r,\dots,\textfrak{c}_{1}^r,\textfrak{c}_{0}^r)$, which represents the reverse of the Gray image of $\textfrak{c}$ over $\ziv$. In a similar manner, the reverse-complement of $\textfrak{c}$ is given as $\textfrak{c}^{RC}=(\textfrak{c}_{n-1}^{rc},\dots,\textfrak{c}_{1}^{rc},\textfrak{c}_{0}^{rc})$.\\
To obtain the R-constraint, we use the notion of
 self-reciprocal polynomial. A polynomial $f(x)= \sum_{i=0}^{m} f_i x^i$ over $\ziv$ is self-reciprocal if $f^{*}(x) = f(x)$, where $f^{*}(x)= \sum_{i=0}^{m} f_i x^{m-i}$ is the reciprocal polynomial of $f(x)$.
\begin{theorem}\label{CREven}
    Let $\C$ be a $\tdl$-CC code over $\R$ of even length $n$ such that $\C =\langle \textfrak{g}(x) \rangle  = \langle g_1(x)+\omega g_2(x) \rangle $,
     where $\textfrak{g}(x)$ is a monic polynomial such that $\textfrak{g}(x) |_{r} (x^n-\gamma)$ and $g_1(x), g_2(x) \in \ziv[x]$. Then $\C$ is an R-code if
     \begin{itemize}
         \item[1.] $g_1(x) = g_1^{*}(x)$;
         \item[2.] $g^{*}_2(x)x^{m-m'} = g_2(x)$;
         \item[3.] $\begin{cases}
                     \mathfrak{d}(\omega)g_2(x) \in \C ,& \textit{if $m$ is even}, \\
                    (1+2\omega)g_2(x) \in \C ,& \textit{if $m$ is odd};  \end{cases}$
     \end{itemize}
     where degrees of $g_1(x)$ and $g_2(x)$ are $m$ and $m'$ with $m>m'$, respectively.
\end{theorem}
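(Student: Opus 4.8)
The plan is to reduce the claim to the single containment $\textfrak{g}^R\in\C$ for the generator and then propagate it to every codeword. Writing $\textfrak{g}(x)=\sum_{i=0}^{m}\textfrak{g}_i x^i$ and recalling that reversal sends the coordinate vector of $\textfrak{g}$ to $(\text{\textbaro}(\textfrak{g}_{n-1}),\dots,\text{\textbaro}(\textfrak{g}_0))$, I would first record the identity $\textfrak{g}^R(x)=\sum_{i=0}^{m}\text{\textbaro}(\textfrak{g}_i)\,x^{n-1-i}$. The elementary observation that makes the whole computation manageable is that $\text{\textbaro}$ fixes $\ziv$ pointwise and $\mathfrak{d}$ annihilates $\ziv$, so by the rule $x\textfrak{r}=\text{\textbaro}(\textfrak{r})x+\mathfrak{d}(\textfrak{r})$ the indeterminate $x$ commutes with every element of $\ziv$; hence $x$ commutes with the $\ziv$-coefficient polynomials $g_1(x)$, $g_2(x)$ and their reciprocals. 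Using $\text{\textbaro}(\textfrak{g}_i)=g_{1,i}+(1+3\omega)g_{2,i}$ together with the commutativity of $\R$, I would split the expression as $\textfrak{g}^R(x)=x^{n-1-m}g_1^{*}(x)+(1+3\omega)\,x^{n-1-m'}g_2^{*}(x)$.

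At this point the three hypotheses enter. Condition 1 ($g_1^{*}=g_1$) turns the first summand into $x^{n-1-m}g_1(x)$, and Condition 2 ($g_2^{*}(x)x^{m-m'}=g_2(x)$), after rewriting $x^{n-1-m'}=x^{n-1-m}x^{m-m'}$ and using that $x$ commutes with $g_2^{*}$, turns the second into $(1+3\omega)x^{n-1-m}g_2(x)$. Thus $\textfrak{g}^R(x)=x^{n-1-m}g_1(x)+(1+3\omega)x^{n-1-m}g_2(x)$. I would then compare this with $x^{n-1-m}\textfrak{g}(x)=x^{n-1-m}g_1(x)+x^{n-1-m}\omega g_2(x)$, which already lies in $\C$ since $\C=\la\textfrak{g}(x)\ra$ is a left $\Rx$-module. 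The entire difference reduces to comparing $(1+3\omega)x^{n-1-m}g_2(x)$ with $x^{n-1-m}\omega g_2(x)$, and here the parity of $n-1-m$ (equivalently of $m$, as $n$ is even) forces a case split governed by Lemma \ref{Gen:xnr}.

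When $m$ is odd, $n-1-m$ is even, so $x^{n-1-m}$ commutes with $\omega$ and the difference collapses to $(1+2\omega)x^{n-1-m}g_2(x)=x^{n-1-m}\big((1+2\omega)g_2(x)\big)$; when $m$ is even, $n-1-m$ is odd and Lemma \ref{Gen:xnr} produces an extra derivation term, leaving the difference equal up to sign to $x^{n-2-m}\big(\mathfrak{d}(\omega)g_2(x)\big)$, with $n-2-m$ even. In either case Condition 3 asserts precisely that the bracketed element lies in $\C$, so left multiplication by the relevant power of $x$ keeps it in $\C$, whence $\textfrak{g}^R\in\C$. Lemma \ref{tdcom} is what guarantees the derivation terms behave as claimed (namely $\mathfrak{d}^2=0$ and $\mathfrak{d}\circ\text{\textbaro}+\text{\textbaro}\circ\mathfrak{d}=0$).

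Finally, to pass from the generator to an arbitrary codeword I would use that the reversal map $\textfrak{c}\mapsto\textfrak{c}^R$ is $\text{\textbaro}$-semilinear (since $(\lambda\textfrak{c})^R=\text{\textbaro}(\lambda)\textfrak{c}^R$) and that $\{\textfrak{g}(x),x\textfrak{g}(x),\dots,x^{n-m-1}\textfrak{g}(x)\}$ spans $\C$, so it suffices to check that the reverse of each shift $x^{i}\textfrak{g}(x)$ again lies in $\C$. I expect this last step to be the main obstacle: it requires an identity expressing the reverse of a shifted codeword in terms of the reverse of the codeword (the skew analogue of the statement that reversal intertwines the constacyclic shift with an inverse shift), which must be established carefully because of the derivation, again using Lemma \ref{tdcom} to cancel the cross terms. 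By contrast, the per-generator computation above is disciplined noncommutative bookkeeping once the commutation of $x$ with $\ziv$ is in hand.
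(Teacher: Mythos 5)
There is a genuine gap, and you have located it yourself: the passage from the generator to an arbitrary codeword is never carried out. Your computation of $\textfrak{g}^R(x)$ is correct and consistent with the paper --- writing $\textfrak{g}^R(x)=x^{n-1-m}g_1^{*}(x)+(1+3\omega)x^{n-1-m'}g_2^{*}(x)$, invoking Conditions 1 and 2, and comparing with $x^{n-1-m}\textfrak{g}(x)$ does produce exactly the correction terms $x^{n-1-m}\big((1+2\omega)g_2(x)\big)$ (for $m$ odd) and $3x^{n-2-m}\big(\mathfrak{d}(\omega)g_2(x)\big)$ (for $m$ even) that Condition 3 controls. But this only shows $\textfrak{g}^R\in\C$, i.e.\ it handles the single codeword $\textfrak{c}=\textfrak{a}(x)\textfrak{g}(x)$ with $\textfrak{a}(x)=1$. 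R-closedness of $\C$ requires $(x^{i}\textfrak{g}(x))^R\in\C$ for every $i\le n-m-1$ (your semilinearity reduction to the basis $\{x^{i}\textfrak{g}(x)\}$ is valid, but it reduces to $n-m$ containments, not one). For odd $i$ the coordinate vector of $x^{i}\textfrak{g}(x)$ has entries of the form $\text{\textbaro }(\textfrak{g}_j)+\mathfrak{d}(\textfrak{g}_{j+1})$, so its reverse involves $\textfrak{g}_j-\mathfrak{d}(\text{\textbaro }(\textfrak{g}_{j+1}))$ and is not a formal consequence of the $i=0$ computation; establishing the needed intertwining identity in the presence of the derivation is precisely the work you defer with ``I expect this last step to be the main obstacle.'' A proof cannot end by naming its main obstacle.

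For comparison, the paper does not isolate the generator at all: it takes an arbitrary $\textfrak{c}(x)=\textfrak{a}(x)\textfrak{g}(x)$, splits $\textfrak{a}(x)$ into its even-degree and odd-degree monomials (since $x^{2i}$ is central by Lemma \ref{Gen:xnr} while $x^{2i+1}\omega=(1+3\omega)x^{2i+1}+\mathfrak{d}(\omega)x^{2i}$), reverses coefficientwise, and reassembles the result as $\textfrak{a}^{*}(x)x^{s}\textfrak{g}(x)$ plus correction terms proportional to $\mathfrak{d}(\omega)g_2(x)$ or $(1+2\omega)g_2(x)$. The odd-degree part of $\textfrak{a}(x)$ is exactly where the content you skipped lives. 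Your per-generator calculation is, in effect, the special case $\textfrak{a}(x)=x^{0}$ of that argument; completing your plan for all $x^{i}$, $i$ odd, would reproduce the paper's Case analysis rather than shortcut it.
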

\begin{proof}
 Let $\C  =\langle \textfrak{g}(x) \rangle$ be a $\tdl$-CC code over $\R$ of even length $n$. Assume that the conditions hold. Let $\textfrak{c}(x)$ be the skew polynomial representation of a codeword $\textfrak{c}$ of $\C$. Then we have a polynomial $\textfrak{a}(x)$ whose degree is at most $k-1-s=n-m-1-s$ with $0 \leq s \leq k-1$ over $\R$ such that
\allowdisplaybreaks
    \begin{align*}
         \textfrak{c}(x) = & \textfrak{a}(x)\textfrak{g}(x)\\ =& \textfrak{a}(x)\big( g_1(x)+\omega g_2(x) \big)\\
              = & \sum_{i=0}^{k-1-s} \textfrak{a}_ix^i \bigg[  \sum_{j=0}^{m} g_{1j}x^{j}+ \omega \sum_{j=0}^{m'} g_{2j}x^{j}\bigg]\\
              \text{or }  & \sum_{i=0}^{k-1} \textfrak{a}_ix^i \bigg[  \sum_{j=0}^{m} g_{1j}x^{j}+ \omega \sum_{j=0}^{m'} g_{2j}x^{j}\bigg],
    \end{align*}
 \textbf{Case I:} When $\deg(g_1(x)) = m$ is even:
\begin{align*}
        \textfrak{c}(x) = & \sum_{i=0}^{(k-2)/2} \textfrak{a}_{(2i)}x^{2i} \bigg[  \sum_{j=0}^{m} g_{1j}x^{j}+ \omega \sum_{j=0}^{m'} g_{2j}x^{j}\bigg]  + \sum_{i=0}^{(k-2)/2} \textfrak{a}_{(2i+1)}x^{2i+1} \bigg[  \sum_{j=0}^{m} g_{1j}x^{j} \\ & + \omega \sum_{j=0}^{m'} g_{2j}x^{j}\bigg] \\
                = & \sum_{i=0}^{(k-2)/2} \textfrak{a}_{(2i)}\bigg[  \sum_{j=0}^{m} g_{1j}x^{2i+j}+ \omega \sum_{j=0}^{m'} g_{2j}x^{2i+j} \bigg]
                 + \sum_{i=0}^{(k-2)/2} \textfrak{a}_{(2i+1)}  \bigg[  \sum_{j=0}^{m} g_{1j}x^{2i+j+1} \\ & + \sum_{j=0}^{m'} \big(\text{\textbaro }(\omega g_{2j})x+\mathfrak{d}(\omega g_{2j})\big)x^{2i+j}\bigg] \\
                = & \sum_{i=0}^{(k -2)/2} \textfrak{a}_{(2i)}\bigg[  \sum_{j=0}^{m} g_{1j}x^{2i+j}+ \omega \sum_{j=0}^{m'} g_{2j}x^{2i+j}\bigg]
                  + \sum_{i=0}^{(k -2)/2} \textfrak{a}_{(2i+1)}  \bigg[  \sum_{j=0}^{m} g_{1j}x^{2i+j+1}\\ & +  \text{\textbaro }(\omega) \sum_{j=0}^{m'} g_{2j}x^{2i+j+1} + \mathfrak{d}(\omega) \sum_{j=0}^{m'} g_{2j}x^{2i+j}
                  \bigg].
     \end{align*}
The reverse of the codeword $\textfrak{c}$ corresponds to the skew polynomial $\textfrak{c}(x)$ over $\ziv$ is given by
\begin{align*}
      & \textfrak{c}^{R}(x)\\  = & \sum_{i=0}^{(k -2)/2} \textfrak{a}_{(2i)}\bigg[  \sum_{j=0}^{m} g_{1j}x^{n-1-(2i+j)}+ (1+3\omega) \sum_{j=0}^{m'} g_{2j}x^{n-1-(2i+j)}\bigg]
                  \\ &+ \sum_{i=0}^{(k -2)/2} \textfrak{a}_{(2i+1)}   \bigg[  \sum_{j=0}^{m} g_{1j}x^{n-1-(2i+j+1)}+ \omega \sum_{j=0}^{m'} g_{2j}x^{n-1-(2i+j+1)}  +  3\mathfrak{d}(\omega) \sum_{j=0}^{m'}g_{2j}x^{n-1-(2i+j)}
                  \bigg]\\
                  = & \sum_{i=0}^{(k -2)/2} \textfrak{a}_{(2i)}\bigg[ \bigg( \sum_{j=0}^{m} g_{1j}x^{m-j} \bigg)x^{k -1-2i}+ (1+3\omega)  \bigg( \sum_{j=0}^{m'} g_{2j}x^{m'-j}\bigg)  x^{(m-m')+(k-1-2i)} \bigg] \\&+ \sum_{i=0}^{(k-2)/2} \textfrak{a}_{(2i+1)}
                   \bigg[ \bigg( \sum_{j=0}^{m} g_{1j}x^{m-j}\bigg)x^{k-2-2i} +\omega  \bigg( \sum_{j=0}^{m'} g_{2j}x^{m'-j}\bigg)x^{(m-m')+(k-2-2i)} \\&
                   + 3\mathfrak{d}(\omega) \bigg( \sum_{j=0}^{m'} g_{2j}x^{m'-j}\bigg)x^{(m-m')+(k-1-2i)} \bigg] \\
                 = & \sum_{i=0}^{(k -2)/2} \textfrak{a}_{(2i)}\bigg[ g^{*}_1(x) x^{k-1-2i}+ (1+3\omega)  \Big( g_2^{*}(x) x^{m-m'}\Big)x^{k-1-2i}\bigg] \\&
                 + \sum_{i=0}^{(k -2)/2} \textfrak{a}_{(2i+1)}
                   \bigg[  g^{*}_1(x)x^{k-2-2i} + \omega  \Big( g_2^{*}(x) x^{m-m'}\Big)x^{k-2-2i} + 3\mathfrak{d}(\omega) \Big( g_2^{*}(x) x^{m-m'}\Big)  x^{k -1-2i} \bigg].
     \end{align*}
   As $g_1(x) = g_1^{*}(x)$ and $g^{*}_2(x)x^{m-m'} = g_2(x)$, we get
     \begin{align*}
     & \textfrak{c}^{R}(x) \\  = & \sum_{i=0}^{(k -2)/2} \textfrak{a}_{(2i)}\bigg[ g_1(x) x^{k-1-2i}+ (1+3\omega)   g_2(x) x^{k-1-2i}\bigg]
                 + \sum_{i=0}^{(k-2)/2} \textfrak{a}_{(2i+1)}
                   \bigg[  g_1(x)x^{k-2-2i}  \\ &  + \omega  g_2(x) x^{k-2-2i} + 3\mathfrak{d}(\omega)  g_2(x)  x^{k -1-2i} \bigg] \\
                  = & \sum_{i=0}^{(k-2)/2} \textfrak{a}_{(2i)} x^{k-2-2i} \bigg[  g_1(x) x + (1+3\omega)  g_2(x)x \bigg] + \sum_{i=0}^{(k-2)/2} \textfrak{a}_{(2i+1)} x^{k-2-2i}
                   \bigg[  g_1(x) + \omega  g_2(x) \\& + 3\mathfrak{d}(\omega) g_2(x) x  \bigg] \\
                  = & \sum_{i=0}^{(k -2)/2} \textfrak{a}_{(2i)} x^{k -2-2i} \bigg[ xg_1(x) + x\text{\textbaro }(1+3\omega)  g_2(x) +\mathfrak{d}(1+3\omega)  g_2(x)\bigg] \\&  + \sum_{i=0}^{(k -2)/2} \textfrak{a}_{(2i+1)} x^{k -2-2i}
                  \bigg[  g_1(x)  + \omega  g_2(x) + x \big(3\mathfrak{d}(\omega)\big)  g_2(x)  \bigg] \\
                   = & \bigg[ \sum_{i=0}^{(k -2)/2} \textfrak{a}_{(2i)} x^{k-1-2i} + \sum_{i=0}^{(k -2)/2} \textfrak{a}_{(2i+1)} x^{k -2-2i} \bigg]  \Big( g_1(x) + \omega  g_2(x) \Big) \\& + \bigg[ \sum_{i=0}^{(k -2)/2} \textfrak{a}_{(2i)} x^{k -2-2i} + \sum_{i=0}^{(k -2)/2} \textfrak{a}_{(2i+1)} x^{k -1-2i} \bigg]
                   \Big( 3\mathfrak{d}(\omega) g_2(x) \Big) \\
                    = & \sum_{i=0}^{k-1} \textfrak{a}_{i} x^{k-1-i} \Big( g_1(x) + \omega  g_2(x) \Big)+ \sum_{i=0}^{(k -2)/2} \big( \textfrak{a}_{(2i)}
                   +  \textfrak{a}_{(2i+1)} x \big) x^{k -2-2i}\Big( 3\mathfrak{d}(\omega) g_2(x) \Big)\\
                   = & \textfrak{a}^{*}(x) x^s \Big( g_1(x) + \omega  g_2(x) \Big)+ \sum_{i=0}^{(k -2)/2} \big( \textfrak{a}_{(2i)}
                   +  \textfrak{a}_{(2i+1)} x \big) x^{k -2-2i}\Big( 3\mathfrak{d}(\omega) g_2(x) \Big).
     \end{align*}
     If $ \mathfrak{d}(\omega)g_2(x) \in \C$, $\textfrak{c}^{R}(x) \in  \C.$ Thus, we conclude that $\C$ is an R-code.\\
\noindent \textbf{Case II:} When $\deg(g_1(x)) = m$ is odd:
\begin{align*}
        \textfrak{c}(x) = & \sum_{i=0}^{(k-1)/2} \textfrak{a}_{(2i)}x^{2i} \bigg[  \sum_{j=0}^{m} g_{1j}x^{j}+ \omega \sum_{j=0}^{m'} g_{2j}x^{j}\bigg]
                  + \sum_{i=0}^{(k-3)/2} \textfrak{a}_{(2i+1)}x^{2i+1} \bigg[  \sum_{j=0}^{m} g_{1j}x^{j} \\& + \omega \sum_{j=0}^{m'} g_{2j}x^{j}\bigg] \\
                = & \sum_{i=0}^{(k-1)/2} \textfrak{a}_{(2i)}\bigg[  \sum_{j=0}^{m} g_{1j}x^{2i+j}+ \omega \sum_{j=0}^{m'} g_{2j}x^{2i+j}\bigg]
                 + \sum_{i=0}^{(k-3)/2} \textfrak{a}_{(2i+1)}  \bigg[  \sum_{j=0}^{m} g_{1j}x^{2i+j+1} \\ & + \text{\textbaro }(\omega) \sum_{j=0}^{m'} g_{2j}x^{2i+j+1}+\mathfrak{d}(\omega) \sum_{j=0}^{m'} g_{2j}x^{2i+j}\bigg].
     \end{align*}
The reverse of the codeword $\textfrak{c}$ corresponds to the skew polynomial $\textfrak{c}(x)$ over $\ziv$ is given by
      \begin{align*}
     & \textfrak{c}^{R}(x)\\ = & \sum_{i=0}^{(k-1)/2} \textfrak{a}_{(2i)}\bigg[  \sum_{j=0}^{m} g_{1j}x^{n-1-(2i+j)}+ (1+3\omega) \sum_{j=0}^{m'} g_{2j}x^{n-1-(2i+j)}\bigg] \\&
                 + \sum_{i=0}^{(k-3)/2} \textfrak{a}_{(2i+1)}  \bigg[  \sum_{j=0}^{m} g_{1j}x^{n-1-(2i+j+1)}  + \omega  \sum_{j=0}^{m'} g_{2j}x^{n-1-(2i+j+1)} +3\mathfrak{d}(\omega) \sum_{j=0}^{m'} g_{2j}x^{n-1-(2i+j)}\bigg]  \\
                  = & \sum_{i=0}^{(k-1)/2} \textfrak{a}_{(2i)}\bigg[ \bigg( \sum_{j=0}^{m} g_{1j}x^{m-j} \bigg)x^{k-1-2i}+ (1+3\omega)  \bigg( \sum_{j=0}^{m'} g_{2j}x^{m'-j}\bigg)x^{(m-m')+(k-1-2i)} \bigg] \\
                  & + \sum_{i=0}^{(k-3)/2} \textfrak{a}_{(2i+1)}
                   \bigg[ \bigg( \sum_{j=0}^{m} g_{1j}x^{m-j}\bigg) x^{k-2-2i} + \omega  \bigg( \sum_{j=0}^{m'} g_{2j}x^{m'-j}\bigg)x^{(m-m')+(k-2-2i)}\\&
                   + 3\mathfrak{d}(\omega) \bigg( \sum_{j=0}^{m'} g_{2j}x^{m'-j}\bigg)
                 x^{(m-m')+(k-1-2i)}\bigg] \\
                 = & \sum_{i=0}^{(k-1)/2} \textfrak{a}_{(2i)}\bigg[ g^{*}_1(x) x^{k-1-2i}+ (1+3\omega)  \Big( g_2^{*}(x) x^{m-m'}\Big)x^{k-1-2i}\bigg]  \\&
                 + \sum_{i=0}^{(k-3)/2} \textfrak{a}_{(2i+1)}
                   \bigg[  g^{*}_1(x)x^{k-2-2i}  + \omega  \Big( g_2^{*}(x) x^{m-m'}\Big)x^{k-2-2i}   + 3\mathfrak{d}(\omega) \Big( g_2^{*}(x) x^{m-m'}\Big)x^{k-1-2i}\bigg].
     \end{align*}
As $g_1(x) = g_1^{*}(x)$ and $g^{*}_2(x)x^{m-m'} = g_2(x)$, we get
     \begin{align*}
      &  \textfrak{c}^{R}(x) \\ = & \sum_{i=0}^{(k-1)/2} \textfrak{a}_{(2i)}\bigg[ g_1(x) x^{k-1-2i}+ (1+3\omega)   g_2(x)x^{k-1-2i}\bigg]  \\&
                 + \sum_{i=0}^{(k-3)/2} \textfrak{a}_{(2i+1)}
                   \bigg[  g_1(x)x^{k-2-2i} + \omega  g_2(x) x^{k-2-2i}  + 3\mathfrak{d}(\omega)  g_2(x) x^{k-1-2i}\bigg] \\
                  = & \sum_{i=0}^{(k-1)/2} \textfrak{a}_{(2i)} x^{k-1-2i} \bigg[ g_1(x) + (1+3\omega)  g_2(x) \bigg] + \sum_{i=0}^{(k-3)/2} \textfrak{a}_{(2i+1)} x^{k-3-2i}
                   \bigg[  g_1(x)x \\ & + \omega  g_2(x)x \bigg]
                 + \sum_{i=0}^{(k-3)/2} \textfrak{a}_{(2i+1)} x^{k-1-2i}
                   \bigg[ 3\mathfrak{d}(\omega) g_2(x)\bigg] \\
                  = & \sum_{i=0}^{(k-1)/2} \textfrak{a}_{(2i)} x^{k-1-2i} \bigg[ g_1(x) + (1+3\omega)  g_2(x)  \bigg] + \sum_{i=0}^{(k-3)/2} \textfrak{a}_{(2i+1)} x^{k-3-2i}
                   \bigg[ x g_1(x) \\& + x \text{\textbaro }(\omega) g_2(x)  + \mathfrak{d}(\omega) g_2(x)\bigg]+ \sum_{i=0}^{(k-3)/2} \textfrak{a}_{(2i+1)} x^{k-1-2i}
                   \bigg[ 3\mathfrak{d}(\omega) g_2(x) \bigg] \\
                   = & \sum_{i=0}^{(k-1)/2} \textfrak{a}_{(2i)} x^{k-1-2i} \bigg[ g_1(x) + (1+3\omega)  g_2(x) \bigg] + \sum_{i=0}^{(k-3)/2} \textfrak{a}_{(2i+1)} x^{k-2-2i}
                   \bigg[  g_1(x)  + (1+3\omega)  g_2(x) \bigg]  \\& + \sum_{i=0}^{(k-3)/2} \textfrak{a}_{(2i+1)} x^{k-3-2i}
                   \bigg[ \mathfrak{d}(\omega) g_2(x) \bigg]  + \sum_{i=0}^{(k-3)/2} \textfrak{a}_{(2i+1)}  x^{k-1-2i}
                   \bigg[ 3\mathfrak{d}(\omega) g_2(x)\bigg] \\
                    = & \sum_{i=0}^{k-1} \textfrak{a}_i x^{k-1-i} \bigg[ g_1(x) + \omega  g_2(x) \bigg] + \sum_{i=0}^{k-1} \textfrak{a}_i x^{k-1-i}
                 \big( (1+2\omega)  g_2(x) \big) \\& + \sum_{i=0}^{(k-3)/2} \textfrak{a}_{(2i+1)} x^{k-3-2i}  (3x^2+1)
                   \bigg( \mathfrak{d}(\omega) g_2(x) \bigg)\\
                    = & \textfrak{a}^{*}(x) x^{s} \Big( g_1(x) + \omega  g_2(x) \Big)
                   + a^{*}(x) x^{s} \Big( (1+2\omega)  g_2(x) \Big) \\& + \sum_{i=0}^{(k-3)/2} \textfrak{a}_{(2i+1)} x^{k-3-2i}  (3x^2+1)
                   \Big( \mathfrak{d}(\omega) g_2(x) \Big).
     \end{align*}
     If $(1+2\omega) g_2(x) \in \C$, $\textfrak{c}^{R}(x) \in  \C.$ Thus, we conclude that $\C$ is an R-code.
\end{proof}
\noindent Analogously, we state the reversibility constraint for a $\tdl$-CC code over $\R$ for odd lengths.
\begin{theorem}\label{CROdd}
    Let $\C$ be a $\tdl$-CC code over $\R$ of odd length $n$ such that $\C =\langle \textfrak{g}(x) \rangle =\la g_1(x)+\omega g_2(x) \ra $,
    where $\textfrak{g}(x)$ is a monic polynomial such that $\textfrak{g}(x) |_{r} (x^n-\gamma)$ and $g_1(x), g_2(x) \in \ziv[x]$. Then $\C$ is an R-code if
     \begin{itemize}
         \item[1.] $g_1(x) = g_1^{*}(x)$;
         \item[2.] $g^{*}_2(x)x^{m-m'} = g_2(x)$;
         \item[3.] $\begin{cases}
                    (1+2\omega) g_2(x) \in \C ,& \textit{if $m$ is even}, \\   \mathfrak{d}(\omega)g_2(x) \in \C ,& \textit{if $m$ is odd};
                    \end{cases}$
     \end{itemize}
     where degrees of $g_1(x)$ and $g_2(x)$ are $m$ and $m'$ with $m>m'$, respectively.
\end{theorem}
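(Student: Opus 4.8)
The plan is to run the argument of Theorem \ref{CREven} essentially unchanged, observing that what actually drives the computation is the parity of $k=n-\deg\textfrak{g}(x)$, the rank of the free $\R$-module $\C$, rather than the parity of $m=\deg g_1(x)$ by itself. For even $n$ the integers $k$ and $m$ have the same parity, while for odd $n$ they have opposite parity; this single fact is what interchanges the two branches of condition (3) relative to Theorem \ref{CREven}. Thus for odd $n$ with $m$ even we have $k$ odd and fall into the computation that previously produced the scalar $(1+2\omega)$, whereas for odd $n$ with $m$ odd we have $k$ even and recover the branch that produced $\mathfrak{d}(\omega)$.

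First I would express an arbitrary codeword as $\textfrak{c}(x)=\textfrak{a}(x)\textfrak{g}(x)$ with $\textfrak{a}(x)=\sum_i\textfrak{a}_ix^i$ of degree at most $k-1$, and split the sum into even- and odd-indexed parts. By Lemma \ref{Gen:xnr} the even powers $x^{2i}$ are central and commute past the coefficients of $\textfrak{g}(x)$, while the odd powers must be pushed through using $x\textfrak{r}=\text{\textbaro}(\textfrak{r})x+\mathfrak{d}(\textfrak{r})$; this is what attaches the extra $\text{\textbaro}$- and $\mathfrak{d}$-terms to the $\omega g_2$-part. I would then apply the reversal $\textfrak{c}\mapsto\textfrak{c}^{R}$, which sends every coefficient through $\text{\textbaro}$ (so $\omega\mapsto\text{\textbaro}(\omega)=1+3\omega$) and reflects the exponents, converting $g_1(x)$ and $g_2(x)$ into $g_1^{*}(x)$ and $g_2^{*}(x)x^{m-m'}$. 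Conditions (1) and (2) then let me rewrite these as $g_1(x)$ and $g_2(x)$ again.

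The decisive step is the final regrouping, and this is where the two branches separate. In both, the principal terms reassemble into $\textfrak{a}^{*}(x)x^{s}\big(g_1(x)+\omega g_2(x)\big)=\textfrak{a}^{*}(x)x^{s}\textfrak{g}(x)$, which lies in $\C$ automatically. When $m$ is odd (so $k$ is even) the surviving terms form a single left-$\Rx$-multiple of $\mathfrak{d}(\omega)g_2(x)$, so they belong to $\C$ once $\mathfrak{d}(\omega)g_2(x)\in\C$. When $m$ is even (so $k$ is odd) the remainder splits as $\textfrak{a}^{*}(x)x^{s}\big((1+2\omega)g_2(x)\big)$ plus a $(3x^{2}+1)\mathfrak{d}(\omega)g_2(x)$-type correction; writing $\mathfrak{d}(\omega)=\alpha(1+2\omega)$ and using that $\C$ is a left $\Rx$-module, the hypothesis $(1+2\omega)g_2(x)\in\C$ absorbs both pieces. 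In either branch $\textfrak{c}^{R}(x)\in\C$, so $\C$ is an R-code.

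The hard part will be purely the bookkeeping: tracking how the non-central factor $x$ redistributes the $\text{\textbaro}$- and $\mathfrak{d}$-contributions between the even- and odd-indexed summands after reversal, and confirming that the odd-$n$ parity shift swaps the two leftover scalars without spawning any new term. Once the correspondence through the parity of $k=n-m$ is fixed, no new idea beyond Theorem \ref{CREven} is needed; the entire novelty is the parity bookkeeping.
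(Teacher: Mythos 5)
Your proposal is correct and follows essentially the same route as the paper, which simply states that the proof "proceeds in the same manner as Theorem \ref{CREven}". Your observation that the governing parity is that of $k=n-m$ (so that odd $n$ swaps the two branches of condition (3), with $\mathfrak{d}(\omega)=\alpha(1+2\omega)$ letting the hypothesis $(1+2\omega)g_2(x)\in\C$ absorb the residual $\mathfrak{d}(\omega)g_2(x)$ terms) is exactly the bookkeeping the paper leaves implicit.
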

\begin{proof}
The proof proceeds in the same manner as Theorem \ref{CREven}.
\end{proof}
\noindent Using the Gray map, we give some useful relations between the elements of $\R$ and their complements in the next lemma, which are readily implied by Table \ref{DNAcorres.}.
\begin{lemma}
    For every $\textfrak{a}, \textfrak{b} \in \R,$ we get
    \begin{enumerate}
        \item $\textfrak{a}+\textfrak{a}^{c} =1.$
        \item $\textfrak{a}^{c}+\textfrak{b}^{c}=(\textfrak{a}+\textfrak{b})^{c}+1.$
        \item $(\textfrak{a}+\textfrak{b})^r = \text{\textbaro }(\textfrak{a}) + \text{\textbaro }(\textfrak{b}).$
    \end{enumerate}
\end{lemma}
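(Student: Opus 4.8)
The plan is to translate the base-complement and the (pair-internal) reverse operation into explicit algebraic maps on $\R$ via the Gray map $\phi$, and then to read off all three identities as one-line consequences. Write $\textfrak{a}=\e$ with $a,b\in\ziv$, so that $\phi(\textfrak{a})=(a,a+b)$ and the DNA pair of $\textfrak{a}$ is obtained by sending each of these two coordinates through $0\mapsto A,\,1\mapsto T,\,2\mapsto C,\,3\mapsto G$. Since the Watson--Crick rule exchanges $A\leftrightarrow T$ and $C\leftrightarrow G$, at the level of $\ziv$ the complement is the map $t\mapsto 1-t$ (it swaps $0\leftrightarrow 1$ and $2\leftrightarrow 3$), and reversing a DNA pair simply interchanges its two letters, i.e.\ swaps the two Gray coordinates.

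First I would compute $\textfrak{a}^c$. Because the complement acts coordinatewise on the DNA pair, $\phi(\textfrak{a}^c)=(1-a,\,1-(a+b))$. Writing $\textfrak{a}^c=a'+\omega b'$ and matching $\phi(\textfrak{a}^c)=(a',a'+b')$ forces $a'=1-a$ and $b'=-b$, so $\textfrak{a}^c=(1-a)-\omega b=1-\textfrak{a}$. Item (1) is then immediate, namely $\textfrak{a}+\textfrak{a}^c=\textfrak{a}+(1-\textfrak{a})=1$. Item (2) follows by applying this same identity three times: $\textfrak{a}^c+\textfrak{b}^c=(1-\textfrak{a})+(1-\textfrak{b})=2-\textfrak{a}-\textfrak{b}$, while $(\textfrak{a}+\textfrak{b})^c+1=\big(1-(\textfrak{a}+\textfrak{b})\big)+1=2-\textfrak{a}-\textfrak{b}$, and the two expressions coincide in $\R$.

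For item (3) I would identify the reverse operation on a single coordinate in the same style. Reversing the DNA pair of $\textfrak{a}$ swaps its two letters, hence swaps the two Gray coordinates, giving $\phi(\textfrak{a}^r)=(a+b,\,a)$; solving for $\textfrak{a}^r=a'+\omega b'$ as before yields $\textfrak{a}^r=(a+b)-\omega b$. Comparing with \eqref{theta}, where $\text{\textbaro}(\e)=a+(1+3\omega)b=(a+b)-\omega b$, we obtain the coordinate-level identity $\textfrak{a}^r=\text{\textbaro}(\textfrak{a})$ for every $\textfrak{a}\in\R$. Item (3) is then immediate from the fact that $\text{\textbaro}$ is an automorphism, hence additive: $(\textfrak{a}+\textfrak{b})^r=\text{\textbaro}(\textfrak{a}+\textfrak{b})=\text{\textbaro}(\textfrak{a})+\text{\textbaro}(\textfrak{b})$.

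The only genuinely delicate step is the first one: pinning down that the base-complement is $t\mapsto 1-t$ on $\ziv$ and that pair-reversal swaps Gray coordinates, and then inverting $\phi$ correctly so as to land the identities $\textfrak{a}^c=1-\textfrak{a}$ and $\textfrak{a}^r=\text{\textbaro}(\textfrak{a})$ inside $\R$ rather than merely in $\ziv^2$. Once these two coordinate-level identities are in hand, all three items reduce to trivial algebra, with (3) using nothing beyond the additivity of $\text{\textbaro}$; no case analysis over the sixteen rows of Table \ref{DNAcorres.} is required, though one could verify each identity directly against the table as a sanity check.
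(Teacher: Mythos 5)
Your proof is correct, and it is more explicit than what the paper offers: the paper states only that the three relations are ``readily implied by Table \ref{DNAcorres.}'', i.e.\ its implicit argument is a finite verification against the sixteen listed Gray images. What you do differently is to first extract the two closed-form identities $\textfrak{a}^{c}=1-\textfrak{a}$ and $\textfrak{a}^{r}=\text{\textbaro}(\textfrak{a})$ by pushing the Watson--Crick swap ($t\mapsto 1-t$ on $\ziv$) and the pair-reversal (swap of the two Gray coordinates) through $\phi^{-1}$, after which all three items are one-line algebra. This buys something the table-check does not: items (2) and (3) are statements about all pairs $(\textfrak{a},\textfrak{b})$, so a literal table verification would need either $256$ cases or an appeal to linearity, whereas your identities settle them uniformly; moreover, the two identities you isolate are exactly the ones the paper uses tacitly elsewhere (the formula $\textfrak{c}^{rc}_{i}=1+3\,\textfrak{c}^{r}_{i}$ in the proof of Theorem \ref{DNA}, and the description of $\textfrak{c}^{R}$ via $\text{\textbaro}$ preceding Theorem \ref{CREven}), so your argument also certifies those steps. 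Your computations check out against the table (e.g.\ $\omega\mapsto AT$ has complement $TA\mapsto 1+3\omega=1-\omega$ and reverse $TA\mapsto 1+3\omega=\text{\textbaro}(\omega)$), and the identification $\text{\textbaro}(\e)=(a+b)-\omega b$ from \eqref{theta} is exactly right.
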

Now, we derive a necessary and sufficient criterion for a $\tdl$-CC code to be an RC-code.
\begin{theorem}\label{DNA}
    Let $\C$ be a $\tdl$-CC code over $\R$ of length $n$. Then $\C$ is an RC-code if and only if $\C$ is an R-code and $1+x+\cdots+x^{n-1} \in \C$.
\end{theorem}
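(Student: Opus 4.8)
The plan is to reduce the whole equivalence to a single clean identity relating the reverse-complement of a codeword to its reverse via the all-ones word. Write $\mathbf{1}=(1,1,\dots,1)\in\R^n$, whose skew-polynomial representation is exactly $1+x+\cdots+x^{n-1}$. The first and only substantive step is to establish, for every $\textfrak{c}=(\textfrak{c}_0,\dots,\textfrak{c}_{n-1})\in\R^n$, the identity $\textfrak{c}^{RC}=\mathbf{1}-\textfrak{c}^{R}$. By part (1) of the preceding lemma we have $\textfrak{a}^{c}=1-\textfrak{a}$ for every $\textfrak{a}\in\R$, and since $\text{\textbaro }$ is a ring automorphism we have $\text{\textbaro }(1)=1$; hence the $i$-th component satisfies $\textfrak{c}_i^{rc}=(\textfrak{c}_i^{r})^{c}=1-\textfrak{c}_i^{r}$. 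Stacking these in the reversed order that defines both $\textfrak{c}^{R}=(\textfrak{c}_{n-1}^{r},\dots,\textfrak{c}_0^{r})$ and $\textfrak{c}^{RC}=(\textfrak{c}_{n-1}^{rc},\dots,\textfrak{c}_0^{rc})$ gives $\textfrak{c}^{RC}=\mathbf{1}-\textfrak{c}^{R}$, equivalently $\textfrak{c}^{RC}(x)=\big(1+x+\cdots+x^{n-1}\big)-\textfrak{c}^{R}(x)$.

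With this identity in hand, both implications become short module-theoretic arguments, using only that $\C$ is an $\R$-module and hence closed under subtraction. For the forward direction, assume $\C$ is an RC-code. Applying the RC-constraint to the zero codeword gives $\mathbf{0}^{RC}=\mathbf{1}-\mathbf{0}^{R}=\mathbf{1}\in\C$ (here $\mathbf{0}^{R}=\mathbf{0}$ since $\text{\textbaro }(0)=0$), so $1+x+\cdots+x^{n-1}\in\C$. Then for an arbitrary $\textfrak{c}\in\C$ we have $\textfrak{c}^{RC}\in\C$, and rearranging the identity yields $\textfrak{c}^{R}=\mathbf{1}-\textfrak{c}^{RC}$, a difference of two elements of $\C$; hence $\textfrak{c}^{R}\in\C$ and $\C$ is an R-code.

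For the converse, assume $\C$ is an R-code with $\mathbf{1}=1+x+\cdots+x^{n-1}\in\C$. For any $\textfrak{c}\in\C$ the R-constraint gives $\textfrak{c}^{R}\in\C$, and then $\textfrak{c}^{RC}=\mathbf{1}-\textfrak{c}^{R}$ is again a difference of elements of $\C$, so $\textfrak{c}^{RC}\in\C$; therefore $\C$ is an RC-code.

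The only genuinely delicate point, and the one I would write out most carefully, is the componentwise identity in the first step: one must match the index reversal in the definitions of $\textfrak{c}^{R}$ and $\textfrak{c}^{RC}$ and confirm that reverse-then-complement agrees with complement-then-reverse (both equal $1-\text{\textbaro }(\textfrak{c}_i)$, using $\text{\textbaro }(1)=1$), so that $\textfrak{c}^{RC}$ and $\textfrak{c}^{R}$ differ by precisely the constant all-ones word. Once that identity is secured, the equivalence is immediate from module closure; notably, no case analysis on the parity of $n$, on the form of the generator $\textfrak{g}(x)$, or on $\gamma$ is required.
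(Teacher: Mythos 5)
Your proof is correct and follows essentially the same route as the paper: the paper's computation of $(1+x+\cdots+x^{n-1})+3\textfrak{c}^{R}(x)=\textfrak{c}^{RC}(x)$ is exactly your identity $\textfrak{c}^{RC}=\mathbf{1}-\textfrak{c}^{R}$ (since $3=-1$ in $\R$), applied to the zero codeword and to general codewords with linearity, just as you do. You merely package the key relation as a single stated identity before deriving both implications, which is a cleaner presentation of the same argument.
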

\begin{proof}
    Let $\C$ be an $n$-length $\tdl$-CC code over $\R$.
    Suppose  $\C$ is an R-code and $1+x+\cdots+x^{n-1} \in \C$. If  $\textfrak{c}=(\textfrak{c}_0,\textfrak{c}_1,\dots,\textfrak{c}_{n-1})$ is a codeword in $\C,$ i.e., the corresponding polynomial representation  $\textfrak{c}(x)= \sum_{i=0}^{n-1}\textfrak{c}_i x^i \in \C,$ then its reverse $\textfrak{c}^R(x)=  \sum_{i=0}^{n-1}\textfrak{c}^r_{n-1-i} x^i \in \C.$   As $\C$ is a left $\R$-submodule and $( 1+x+\cdots+x^{n-1}) \in \C$,  $( 1+x+\cdots+x^{n-1}) + 3\textfrak{c}^R(x) \in \C.$  \\
    Now,
    \begin{align*}
            ( 1+x+\cdots+x^{n-1}) + 3\textfrak{c}^R(x)= & \sum_{i=0}^{n-1}x^i +  \sum_{i=0}^{n-1}3 \textfrak{c}^r_{n-1-i} x^i\\
        =& \sum_{i=0}^{n-1} \Big( 1 +3 \textfrak{c}^r_{n-1-i}\Big) x^i
        =\sum_{i=0}^{n-1} \textfrak{c}^{rc}_{n-1-i} x^i\\
        =&\textfrak{c}^{RC}(x).
    \end{align*}
     Thus, $ \textfrak{c}^{RC}(x) \in \C$ and hence the code $\C$ is an RC-code.\\
    Conversely, suppose $\C$ is an RC-code. Thus, if  $\textfrak{c}(x)= \sum_{i=0}^{n-1}\textfrak{c}_i x^i \in \C,$  we get $\textfrak{c}^{RC}(x) = \sum_{i=0}^{n-1} \textfrak{c}^{rc}_{n-1-i} x^i  \in \C.$ Since the all-zero codeword $\boldsymbol{0} \in \C$, $\boldsymbol{0}^{RC}= ( 1+x+\cdots+x^{n-1}) \in \C.$\\ Again, $ ( 1+x+\cdots+x^{n-1}) + 3\textfrak{c}^{RC}(x) \in \C$ and
\begin{align*}
     \sum_{i=0}^{n-1}x^i +  \sum_{i=0}^{n-1} 3\textfrak{c}^{rc}_{n-1-i} x^i =& \sum_{i=0}^{n-1} \Big( 1 + 3 \textfrak{c}^{rc}_{n-1-i}\Big) x^i \\
     = & \sum_{i=0}^{n-1} \textfrak{c}^r_{n-1-i} x^i \\
     = & \textfrak{c}^{R}(x).
\end{align*}
Thus, the code $\C$ is an R-code.
\end{proof}

\noindent Now, we introduce a new construction to generate DNA codes from the codes obtained by Theorems \ref{CREven} and \ref{CROdd}. This approach uses the sum of codes to extend an R-code onto DNA code. As linear codes of length $n$ over $\R$  are characterized as submodules of $\R^n$ over $\R$, the sum of two linear codes of the same length over a ring is defined as the sum of two submodules of a module. Clearly, the sum of two linear codes is again a linear code of the same length over the same ring. \\
\begin{definition}
    Let $\C_1$ and $\C_2$ be two linear codes over $\R$ of the same length. The set
    $$\C_1+\C_2 = \{ \textfrak{c}_1+\textfrak{c}_2 ~|~ \textfrak{c}_1 \in \C_1, \textfrak{c}_2 \in \C_2 \}$$
    is a linear code over $\R$ and is called the sum of codes $\C_1$ and $\C_2$. It is the smallest $\R$-submodule of $\R^n$ which contains both the submodules $\C_1$ and $\C_2$ of $\R$-module $\R^n$.
    \end{definition}
\begin{lemma}\label{revsum}
Let $\C_i$ be $(\text{\textbaro },\mathfrak{d},\gamma_i)$-CC codes of length $n $ over $\R,$ for $i=1,2.$ If $\Phi(\C_1)$ and $\Phi(\C_2)$ are R-codes, then $\Phi(\C_1)+\Phi(\C_2)$ is an R-code of length $2n$.
\end{lemma}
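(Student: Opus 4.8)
The plan is to reduce the statement to the elementary observation that the reverse operation on $\ziv$-vectors is additive, so that the property of being closed under reversal (i.e., being an R-code) passes to sums. Note first that the $\tdl$-constacyclic structure of the $\C_i$ plays no role beyond producing the two codes; the argument uses only that $\Phi(\C_1)$ and $\Phi(\C_2)$ are R-codes over $\ziv$.

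First I would record the ambient setting. Since $\Phi:\R^n\to\ziv^{2n}$ is $\ziv$-linear, each $\Phi(\C_i)$ is a $\ziv$-linear code of length $2n$, and hence $\Phi(\C_1)+\Phi(\C_2)$, being the sum of two submodules of $\ziv^{2n}$, is again a $\ziv$-linear code of length $2n$. Under the correspondence $0\mapsto A,\ 1\mapsto T,\ 2\mapsto C,\ 3\mapsto G$, taking the reverse of the DNA sequence associated with a Gray image $y=(y_0,\dots,y_{2n-1})\in\ziv^{2n}$ amounts to the coordinate reversal $y^r=(y_{2n-1},\dots,y_1,y_0)$. The key point is that this map $y\mapsto y^r$ is a fixed permutation of coordinates and is therefore additive: $(y_1+y_2)^r=y_1^r+y_2^r$ for all $y_1,y_2\in\ziv^{2n}$.

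With these in hand the main step is immediate. I would take an arbitrary codeword $z\in\Phi(\C_1)+\Phi(\C_2)$ and write $z=y_1+y_2$ with $y_i\in\Phi(\C_i)$. Then $z^r=y_1^r+y_2^r$ by additivity of reversal. Since $\Phi(\C_1)$ and $\Phi(\C_2)$ are R-codes, $y_1^r\in\Phi(\C_1)$ and $y_2^r\in\Phi(\C_2)$, whence $z^r=y_1^r+y_2^r\in\Phi(\C_1)+\Phi(\C_2)$. Thus the sum is closed under reversal, i.e., it is an R-code of length $2n$.

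There is no genuine obstacle in this argument; the only care needed is the bookkeeping verifying that the reverse of the length-$2n$ Gray image really corresponds to the reverse of the associated DNA sequence via Table \ref{DNAcorres.}, and that coordinate reversal commutes with componentwise addition over $\ziv$. Once the reversal map is identified as an additive coordinate permutation, closure of the sum follows at once from closure of each summand.
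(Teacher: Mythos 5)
Your proof is correct and takes essentially the same approach as the paper: the paper's proof is an explicit coordinate computation verifying that $(\Phi(\textfrak{c}_1)+\Phi(\textfrak{c}_2))^{R}=\Phi(\textfrak{c}_1)^{R}+\Phi(\textfrak{c}_2)^{R}$, which is exactly your observation that coordinate reversal on $\ziv^{2n}$ is an additive permutation of coordinates. Your version just states the key fact abstractly rather than unwinding the Gray-map coordinates.
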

\begin{proof}
Let $\C_1$ and $\C_2$ be $(\text{\textbaro },\mathfrak{d},\gamma_1)$-CC and $(\text{\textbaro },\mathfrak{d},\gamma_2)$-CC codes, respectively over $\R$ such that $\Phi(\C_1)$ and $\Phi(\C_2)$ are R-codes of length $2n$.
 Suppose $\textfrak{c}_1=(\textfrak{c}_{10},\textfrak{c}_{11},\dots,\textfrak{c}_{1(n-1)})$ and $\textfrak{c}_2=(\textfrak{c}_{20},\textfrak{c}_{21},\dots,\textfrak{c}_{2(n-1)})$, where $\textfrak{c}_{ij}=a_{ij}+\omega b_{ij},$ $i=1,2$ and $0 \leq j \leq n-1$ be two codewords in $\C_1$ and $\C_2$, respectively. Then
$\Phi(\textfrak{c}_1)+\Phi(\textfrak{c}_2) \in \Phi(\C_1)+\Phi(\C_2)$ for $\Phi(\textfrak{c}_1) \in \Phi(\C_1)$ and $\Phi(\textfrak{c}_2) \in \Phi(\C_2)$. Now we see
\begin{align*}
  \Phi(\textfrak{c}_1)+\Phi(\textfrak{c}_2)=& \big(\phi(\textfrak{c}_{10}),\phi(\textfrak{c}_{11}),\dots,\phi(\textfrak{c}_{1(n-1)})\big) + \big(\phi(\textfrak{c}_{20}),\phi(\textfrak{c}_{21}),\dots,\phi(\textfrak{c}_{2(n-1)})\big)  \\
  = &\big(a_{10},a_{10}+b_{10},a_{11},a_{11}+b_{11}, \dots, a_{1(n-1)}, a_{1(n-1)}+b_{1(n-1)}\big)\\
 & +\big(a_{20},a_{20}+b_{20},a_{21},a_{21}+b_{21}, \dots, a_{2(n-1)}, a_{2(n-1)}+b_{2(n-1)}\big)\\
 = &\big(a_{10}+a_{20},(a_{10}+b_{10})+(a_{20}+b_{20}),a_{11}+a_{21},(a_{11}+b_{11}) \\& +(a_{21}+b_{21}),
 \dots, a_{1(n-1)}+a_{2(n-1)}, (a_{1(n-1)}+b_{1(n-1)}) \\&  +(a_{2(n-1)}+b_{2(n-1)})\big).
\end{align*}
 The reverse of this codeword over $\ziv$ is
\begin{align*}
    (\Phi(\textfrak{c}_1)+\Phi(\textfrak{c}_2))^{R} =& \big((a_{1(n-1)}+b_{1(n-1)})+(a_{2(n-1)}+b_{2(n-1)}), a_{1(n-1)}+a_{2(n-1)}, \\
    &  \dots,(a_{11}+b_{11})+(a_{21}+b_{21}), a_{11}+a_{21}, (a_{10}+b_{10}) \\
    &  +(a_{20}+b_{20}), a_{10}+a_{20}\big)\\
    =& \big(a_{1(n-1)}+b_{1(n-1)},a_{1(n-1)}, \dots, a_{11}+b_{11}, a_{11}, a_{10}+b_{10}, a_{10}\big)\\
    & + \big(a_{2(n-1)}+b_{2(n-1)},a_{2(n-1)}, \dots, a_{21}+b_{21}, a_{21}, a_{20}+b_{20}, a_{20}\big)\\
    =&  \Phi(\textfrak{c}_1)^{R}+\Phi(\textfrak{c}_2)^{R}.
\end{align*}
Since $\Phi(\C_1)$ and $\Phi(\C_2)$ are R-codes, $\Phi(\textfrak{c}_1)^{R} \in \Phi(\C_1)$ and $\Phi(\textfrak{c}_2)^{R} \in \Phi(\C_2)$. Thus $(\Phi(\textfrak{c}_1)+\Phi(\textfrak{c}_2))^{R} \in \Phi(\C_1)+\Phi(\C_2)$ and hence $\Phi(\C_1)+\Phi(\C_2)$ is an R-code.
\end{proof}

\begin{theorem}\label{DNA2}\textbf{(Construction 2)}
    Let $\C_1$ be a $\tdl$-CC code over $\R$ of length $n$ such that $\Phi(\C_1)$ is an R-code and $\C_2$ be the code generated by the all $1$-vector over $\R$ of length $n$. Then the code $\textfrak{D}=\Phi(\C_1)+\Phi(\C_2)$ is a DNA code of length $2n$.
\end{theorem}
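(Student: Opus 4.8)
The plan is to show that $\textfrak{D} = \Phi(\C_1) + \Phi(\C_2)$ satisfies both the R-constraint and the RC-constraint, which together make it a DNA code. First I would invoke Theorem \ref{DNA}, which reduces the problem: a $\tdl$-CC code is an RC-code if and only if it is an R-code and contains the all-ones polynomial $1 + x + \cdots + x^{2n-1}$ (here the relevant length is $2n$). So the proof splits naturally into two tasks: establish that $\textfrak{D}$ is an R-code, and exhibit the all-ones vector of length $2n$ inside $\textfrak{D}$.

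For the R-constraint, I would appeal directly to Lemma \ref{revsum}. By hypothesis $\Phi(\C_1)$ is an R-code, so it remains only to check that $\Phi(\C_2)$ is an R-code, where $\C_2$ is generated by the all-$1$ vector of length $n$ over $\R$. Since the all-$1$ vector over $\R$ has every coordinate equal to $1$, and $\phi(1) = (1,1)$, its Gray image $\Phi(\C_2)$ consists of $\ziv$-multiples of the all-ones vector of length $2n$; such a code is palindromic under coordinate reversal, so $\Phi(\C_2)$ is trivially an R-code. With both $\Phi(\C_1)$ and $\Phi(\C_2)$ being R-codes of length $2n$, Lemma \ref{revsum} gives that $\textfrak{D} = \Phi(\C_1) + \Phi(\C_2)$ is an R-code of length $2n$.

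For the second requirement, I would observe that $\Phi(\C_2)$ contains the Gray image of the all-$1$ vector, which is precisely the all-ones vector $(1,1,\dots,1)$ of length $2n$, i.e., the polynomial $1 + x + \cdots + x^{2n-1}$. Since $\Phi(\C_2) \subseteq \textfrak{D}$ by the definition of the sum of codes, this all-ones vector lies in $\textfrak{D}$. Combining this with the R-code property established above and applying Theorem \ref{DNA} (with length $2n$), we conclude that $\textfrak{D}$ is an RC-code. As $\textfrak{D}$ satisfies both the R- and RC-constraints, it is a DNA code of length $2n$, completing the proof.

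The main obstacle, which is more a matter of care than of difficulty, is ensuring that the length bookkeeping is consistent: the constituent codes $\C_1, \C_2$ have length $n$ over $\R$, but after applying the Gray map $\Phi$ each becomes length $2n$ over $\ziv$, and Theorem \ref{DNA} must be applied at length $2n$. One should also verify that the hypotheses of Theorem \ref{DNA}—namely that the ambient code is itself of the $\tdl$-CC type—are either met by $\textfrak{D}$ or that the relevant direction of Theorem \ref{DNA} (R-code plus all-ones vector implies RC-code) holds for the sum code directly; the cleanest route is to note that the implication used relies only on $\textfrak{D}$ being a linear (sub-module) code over $\ziv$ closed under the reversal operation, both of which we have already secured.
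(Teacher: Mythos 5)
Your overall strategy is the same as the paper's: show $\Phi(\C_2)$ is an R-code, invoke Lemma \ref{revsum} to conclude $\textfrak{D}$ is an R-code of length $2n$, exhibit the all-ones vector of length $2n$ inside $\Phi(\C_2)\subseteq\textfrak{D}$, and finish with Theorem \ref{DNA}. However, the step where you justify that $\Phi(\C_2)$ is an R-code contains a false claim. You assert that $\Phi(\C_2)$ consists of the $\ziv$-multiples of the all-ones vector of length $2n$ and is therefore palindromic. But $\C_2$ is a code over $\R$, hence closed under multiplication by all of $\R$ (indeed, since $x\cdot(1+x+\cdots+x^{n-1})\equiv 1+x+\cdots+x^{n-1}$ modulo $x^n-1$, one has $\C_2=\{\textfrak{r}\cdot(1,1,\dots,1)\mid \textfrak{r}\in\R\}$). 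Taking $\textfrak{r}=\omega$, we get $\phi(\omega)=(0,1)$, so $\Phi(\omega\cdot(1,\dots,1))=(0,1,0,1,\dots,0,1)$, which is neither a $\ziv$-multiple of the all-ones vector nor fixed under coordinate reversal. So ``palindromic, hence trivially an R-code'' does not go through as written.

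The conclusion you need is nevertheless true and the repair is short: for $\textfrak{r}=a+\omega b$, the Gray image $\Phi(\textfrak{r}\cdot\mathbf{1})=(a,a+b,a,a+b,\dots,a,a+b)$ reverses to $(a+b,a,\dots,a+b,a)=\Phi(\text{\textbaro}(\textfrak{r})\cdot\mathbf{1})$, which again lies in $\Phi(\C_2)$. The paper instead verifies the hypotheses of Theorems \ref{CREven}/\ref{CROdd} for $\textfrak{g}(x)=g_1(x)+\omega g_2(x)$ with $g_1(x)=1+x+\cdots+x^{n-1}$ self-reciprocal and $g_2(x)=0$; either route closes the gap. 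Your closing caveat --- that Theorem \ref{DNA} is being applied to the sum code, which is a $\ziv$-linear code of length $2n$ rather than a code literally presented as $\tdl$-CC over $\R$, so one should check that the relevant implication only uses linearity and closure under reversal --- is a legitimate point of care that the paper itself glosses over.
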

\begin{proof}
Let $\Phi(\C_1) $ be an R-code of length $2n$ where $\C_1$ is a $\tdl$-CC code over $\R$ of length $n$.
Let $\C_2$ be the code generated by the all $1$-vector over $\R$ of length $n$, i.e., $\C_2 = \la \textfrak{g}(x) \ra = \la 1+x+\cdots+x^{n-1} \ra.$
As $\textfrak{g}(x)= g_1(x)+\omega g_2(x)$ where $g_1(x)= 1+x+\cdots+x^{n-1}$ and $g_2(x)= 0$, it follows that $g_1(x) $ is self-reciprocal, also $g_2(x)$ satisfies the conditions \textit{2.} and \textit{3.} of Theorem \ref{CREven} or \ref{CROdd} depending on whether $n$ is odd or even, respectively. Thus, $\Phi(\C_2)$
is an R-code of length $2n$. By Lemma \ref{revsum}, $\textfrak{D}=\Phi(\C_1)+\Phi(\C_2)$ is an R-code of length $2n$. \\
   Since $1+x+\cdots+x^{n-1} \in \C_2,$ the all $1$-vector of length $2n$ is in $\Phi(\C_2)$, thus in the code $\textfrak{D}$. Hence, by Theorem \ref{DNA}, $\textfrak{D}$ is a DNA code.
\end{proof}

\section{Computational Results}
In this section, we provide a few examples and codes over $\mathbb{Z}_4$ and their application in DNA codes. Additionally, we compare our obtained codes with the $\mathbb{Z}_4$ database given at \cite{z4codes}. We provide the generator matrices of all the codes obtained in this article at \cite{Z4codes_Gmats}.

\begin{example}
 Let $\C$ be the $(\text{\textbaro }, \mathfrak{d})$-cyclic code of length $n=14$ over $\textfrak{R}$ generated by $\textfrak{g}(x)=x^{11} + \omega x^{10} + 2x^9 + (2\omega  + 1)x^8 + x^6 + x^5 + (\omega  + 3)x^4+ 3x^3 + (3\omega  +2)x^2 + x + \omega  + 1,$ a right divisor of $x^{14}+3$ in $\R[x;\text{\textbaro },\mathfrak{d}]$, where $\mathfrak{d}(\textfrak{r}) =(1+2\omega) (\text{\textbaro }(\textfrak{r})-\textfrak{r})$. In terms of $\textfrak{g}(x)$, the factorization of $x^{14}+3$ is presented by
 \begin{align*}
     x^{14} + 3=& \big(x^3 + (\omega + 3)x^2 + x + \omega + 2\big)\big(x^{11} + \omega x^{10} + 2x^9 + (2\omega  + 1)x^8 \\& + x^6 + x^5 + (\omega  + 3)x^4+ 3x^3 + (3\omega  +2)x^2 + x + \omega  + 1\big).
 \end{align*}
Now, consider the Gray map $\phi(a+\omega b)=(a+b,b)$. Under this map, $\Phi(\C)$ is transformed into a $(\text{\textbaro }, \mathfrak{d})$-cyclic code with parameters $(28, 4^9 , 12)$. This is better than the linear code $(28, 4^9, 9)$ over $\mathbb{Z}_4$ \cite{z4codes}.
\end{example}

 \renewcommand{\arraystretch}{2.0}
\begin{table}[ht]
\begin{center}
\scriptsize
\begin{tabular}{|ccccc|}
\hline  \hline
 $\alpha(\text{\textbaro }(a)-a)$ & $G_M$&$\langle \textfrak{g}(x)\rangle$&  $\C_M$&Comparison/ \\
 &&&&Remark\cite{z4codes}\\
  \hline  \hline

  $1+2\omega $&$~N_1~$&$ x^3 + (3\omega + 3)x^2 + (\omega + 3)x + 2\omega + 1$&$ (16, 4^{12}, 4)$ & Optimal  \\

\hline
  
$1+2\omega $&$~N_1~$&$x^6 + 2x^5 + 2x^4 + 2x^3 + 2x^2 + (3\omega + 2)x + \omega + 3 $&$ (24, 4^{17} 2^1, 4)$ & $ (24, 4^{17}, 4)$  \\

\hline

$1+2\omega $&$~N_1~$&$x^{11} + \omega x^{10} + 2x^9 + (2\omega  + 1)x^8 + x^6 + x^5 + (\omega  + 3)x^4 $&$ (28, 4^9 , 12)$ & $(28, 4^9 , 9)$  \\

&&$+ 3x^3 + (3\omega  +2)x^2 + x + \omega  + 1$&&\\
\hline

$1+2\omega $&$~N_2~$&$x^6 + 3\omega x^5 + (3\omega + 3)x^4 + (3\omega + 2)x^3 + (\omega + 1)x^2 + 1$&$ (32, 4^{20}, 4)$ & Optimal  \\
  \hline
  
$2 $&$~N_2~$&$ x^{10} + (3\omega + 2)x^8 + (3\omega + 3)x^7 + 2x^6 + (2\omega + 3)x^5 $&$ (36, 4^{16}, 8)$ & Optimal  \\
    $ $&$ $&$  +
  (2\omega + 2)x^4  + (3\omega + 2)x^3 + (3\omega + 2)x^2 + 3$&$ $ &   \\
  \hline

$1+2\omega $&$~N_1~$&$x^6 + x^5 + 2x^4 + (2\omega  + 3)x^3 + 2x^2 + (3\omega  + 1)x + \omega $&$ (40, 4^{33}, 4)$ & $(40, 4^{33}, 3)$  \\
  \hline

$1+2\omega $&$~N_1~$&$x^6 + (2\omega  + 2)x^5 + 2x^4 + 2x^2 + (\omega  + 1)x + \omega  + 2$&$ (48, 4^{41} 2^1, 4)$ & $(48, 4^{41}, 3)$  \\
  \hline

$1+2\omega $&$~N_1~$&$x^6 + (3\omega  + 1)x^5 + 2x^4 + (3\omega  + 2)x^3 + 2x^2 + x + 1$&$ (56, 4^{48}, 4)$ & $(56, 4^{48}, 2)$  \\
  \hline
 
$1+2\omega $&$~N_1~$&$x^6 + (2\omega  + 3)x^5 + 2x^4 + (2\omega  + 3)x^3 + 2x^2 + (3\omega  + 3)x + 3\omega  + 2$&$ (80, 4^{73} 2^1, 4)$ & $ (80, 4^{73},1)$  \\
  \hline

$1+2\omega $&$~N_1~$&$x^6 + (\omega  + 1)x^5 + 2x^4 + x^3 + 2x^2 + (2\omega  + 1)x + \omega  + 2$&$ (120, 4^{113} 2^1, 4)$ & $ (120, 4^{113} , 1)$  \\
  \hline

\end{tabular}
\caption{$(\text{\textbaro },\mathfrak{d})$-cyclic codes over $\R$ of length $n$ with Gray images}
\label{tdcyclic}
\end{center}
\end{table}

\begin{example}
Let $\C = \la \textfrak{g}(x) \ra$ be the $(\text{\textbaro }, \mathfrak{d})$-cyclic code over $\R$ of length $14$ such that 
\begin{itemize}
    \item  $\textfrak{g}(x) = x^8 + (2\omega + 3)x^7 + 2x^5 + 2x^4 + 2x^3 + (2\omega + 3)x + 1$.
\item $\mathfrak{d}(\textfrak{r}) = 2(\text{\textbaro }(\textfrak{r}) - \textfrak{r})$ for all $\textfrak{r} \in \R$.
\end{itemize}
 Then $\textfrak{g}(x)|_rx^{14} - 3$ in $\R[x; \text{\textbaro }, \mathfrak{d}]$.
Now, consider the Gray map $\Phi(a + \omega b) = (a, a+b)$. Under this map, $\Phi(\C)$ is transformed into a $(\text{\textbaro }, \mathfrak{d},3)$-cyclic code with parameters $(28, 4^{12} , 8)$. Furthermore, it satisfies the conditions of Theorem $\ref{CREven}$, which implies that it is an R-code over $\mathbb{Z}_4$.
\end{example}

\begin{landscape}
 \renewcommand{\arraystretch}{2.0}
    \begin{table}
\begin{center}
\begin{tabular}{|cccccc|}
	\hline\hline
 $\alpha(\text{\textbaro }(a)-a)$ &$\gamma$& $G_M$&$\langle \textfrak{g}(x)\rangle$&  $\C_M$&Comparison/ Remark\cite{z4codes}\\
  \hline
  \hline

  $1+2\omega $&$3$&$~N_1~$&$x^3 + \omega x^2 + (3\omega  + 2)x + \omega  + 1$&$ (12, 4^7 2^2, 4)$ & $(12, 4^7 , 4)$\\
  \hline

$1+2\omega $&$1$&$~N_3~$&$x^4 + (2\omega  + 3)x^3 + (3\omega  + 3)x^2 + (2\omega  + 3)x + \omega  + 2$&$ (16, 4^{10} 2^1, 4)$ & $(16, 4^{10} , 4)$ \\
  \hline
  
$1+2\omega $&$3$&$~N_1~$&$x^6 + (2\omega  + 3)x^5 + (2\omega  + 2)x^4 + 2\omega x^3 + (2\omega  + 2)x^2 + x + 1$&$ (20, 4^8 2^4, 8)$ & $(20, 4^8, 8)$ \\
  \hline

$1+2\omega $&$1$&$~N_1~$&$x^5 + (3\omega  + 2)x^4 + x^3 + (3\omega  + 2)x^2 + 3x + 3$&$ (24, 4^{18} 2^1, 4)$ & $(24,4^{18},4)$ \\
  \hline

$1+2\omega $&$3$&$~N_2~$&$x^8 + (2\omega + 3)x^7 + 2x^5 + 2x^4 + 2x^3 + (2\omega + 3)x + 1 $&$ (28, 4^{12} , 8)$ & Optimal  \\

\hline

  $1+2\omega $&$3$&$~N_1~$&$x^3 + 2\omega x^2 + (2\omega  + 1)x + 2\omega  + 1$&$ (28, 4^{22} 2^3, 3)$ & New Code\\
  \hline

  $1+2\omega $&$1$&$~N_1~$&$x^5 + (3\omega  + 2)x^4 + 2x^3 + 3\omega x^2 + (\omega  + 2)x + \omega  + 3$&$ (32, 4^{26} 2^1, 4)$ & $ (32, 4^{26}, 3)$\\
  \hline

  $1+2\omega $&$1$&$~N_1~$&$x^5 + (3\omega  + 2)x^4 + 3\omega x^3 + 3\omega x^2 + (3\omega  + 2)x + \omega  + 1$&$ (40, 4^{31} 2^4, 4)$ & $ (40, 4^{31} , 3)$\\
  \hline

  $1+2\omega $&$1$&$~N_1~$&$x^5 + (3\omega  + 2)x^4 + x^3 + (\omega  + 2)x^2 + (2\omega  + 3)x + 3$&$ (48, 4^{42} 2^1, 4)$ & $ (48, 4^{42} , 3)$\\
  \hline

  $1+2\omega $&$1$&$~N_1~$&$x^3 + (2\omega  + 1)x^2 + 1$&$ (56, 4^{50} 2^3, 3)$ & $ (56, 4^{50} , 1)$\\
  \hline

  $1+2\omega $&$1$&$~N_1~$&$x^5 + (3\omega  + 2)x^4 + (\omega  + 1)x^3 + 3\omega x^2 + (2\omega  + 1)x + \omega  + 3$&$ (56, 4^{50} 2^1, 4)$ & $ (56, 4^{50}, 1)$\\
  \hline

  $1+2\omega $&$1$&$~N_1~$&$x^5 + (3\omega  + 2)x^4 + (\omega  + 2)x^3 + 2x^2 + (3\omega  + 2)x + 3\omega  + 3$&$ (60, 4^{54} 2^1, 4)$ & $ (60, 4^{54} , 2)$\\
  \hline

  $1+2\omega $&$1$&$~N_1~$&$x^5 + (3\omega  + 2)x^4 + 2x^3 + 3\omega x^2 + \omega x + \omega  + 1$&$ (64, 4^{58} 2^1, 4)$ & $ (64, 4^{58} , 1)$\\
  \hline

  $1+2\omega $&$3$&$~N_1~$&$x^6 + (3\omega + 3)x^5 + 2x^4 + 2x^2 + (3\omega + 3)x + \omega$&$ (84, 4^{76} 2^2, 4)$ & $ (84, 4^{76}, 1)$\\
  \hline

\end{tabular}
\caption{$\tdl$-CC codes over $\R$ with Gray images}
\label{CC_Tbl2}
\end{center}
\end{table}
\end{landscape}

\begin{example}\label{DNA codes}
Let $\C=\la g(x)\ra = \la x^{10} + 3x^9 + 2x^7 + 2x^5 + 2x^4 + 2x^3 + 3x + 1 \ra$ be the $(\text{\textbaro }, \mathfrak{d})$-cyclic code  over $\R$ of length $n=18$, where $\textfrak{g}(x) |_r x^{18}-1$ in $\R[x;\text{\textbaro },\mathfrak{d}]$ and $\alpha =2$.
Now consider the Gray map $\phi(a+\omega b)=(a,a+b)$. Under this map, $\Phi(\C)$ is transformed into a $(\text{\textbaro }, \mathfrak{d})$-cyclic code with parameters $(36, 4^{16}, 8)$. Further, it satisfies the conditions of Theorems \ref{CREven} and \ref{DNA}. Hence, it is a DNA code over $\mathbb{Z}_4$.
\end{example}

\renewcommand{\arraystretch}{1.53}
\begin{table}[ht]
\begin{center}
\scriptsize
\begin{tabular}{|cccc|}
	\hline
    \hline
$~~n~~$& $\alpha(\text{\textbaro }(a)-a)$&$\langle \textfrak{g}(x)\rangle$&  $\Phi(\C)$ \\
 \hline
 \hline
  $8 $&$ 1+2\omega $ &$x^3 + 3x^2 + 3x + 1$&$(16, 4^{10}, 4) $  \\
  \hline
  $ 10 $&$ 3+2\omega $ & $x^8 + x^6 + x^4 + x^2 + 1 $ & $(20, 4^4, 5) $   \\
 \hline
  $12 $&$  2$ &$x^4 + 3x^3 + 2x^2 + 3x + 1 $&$ (24, 4^{16}, 4)$  \\
  \hline
   $14 $&$ 1+2\omega $ &$ x^8 + 3x^7 + 2x^6 + 2x^4 + 2x^3 + 2x^2 + 3x + 1$&$(28, 4^{12}, 8) $  \\
  \hline
  $15 $&$ 2
 $ &$ x^6 + 2x^5 + 3x^4 + 3x^3 + 3x^2 + 2x + 1$&$ (30, 4^{18}, 4)$  \\
  \hline
  $16 $&$ 1+2\omega $ &$x^{11} + x^{10} + 3x^9 + 3x^8 + 2x^7 + 2x^6 + 2x^5 + 2x^4 + 3x^3 + 3x^2 + x
    + 1  $&$ (32, 4^{10}, 8)$  \\
  \hline
  $ 16$&$1 + 2\omega  $ &$x^6 + 2x^5 + 3x^4 + 2x^3 + 3x^2 + 2x + 1 $&$ (32, 4^{20}, 4)$  \\
  \hline
  $18 $&$ 2 $ &$x^6 + 2x^5 + 2x^4 + 3x^3 + 2x^2 + 2x + 1  $&$(36, 4^{24}, 3) $  \\
  \hline
  $18 $&$ 2 $ &$ x^{10} + 3x^9 + 2x^7 + 2x^5 + 2x^4 + 2x^3 + 3x + 1 
$&$(36, 4^{16}, 8) $  \\
  \hline
  $20 $&$ 2 $ &$x^4 + (2\omega + 1)x^3 + (2\omega + 3)x^2 + (2\omega + 1)x + 1
 $&$ (40, 4^{32}, 4)$  \\
   \hline
  $30 $&$2  $ &$ x^6 + 2x^5 + x^4 + 3x^3 + x^2 + 2x + 1$&$(60, 4^{48}, 4) $  \\
  \hline
 
$30 $&$ 1+2\omega $ &$ x^{28} + x^{26} + x^{24} + x^{22} + x^{20} + x^{18} + x^{16} + x^{14} + x^{12} + x^{10} + x^8 + x^6 + x^4 + x^2 + 1
$&$(60, 4^4, 15) $  \\
  \hline
  
\end{tabular}
\caption{DNA codes from $(\text{\textbaro },\mathfrak{d})$-cyclic codes}
\label{Tbl2}
\end{center}
\end{table}
\vspace{-0.3 cm}

\begin{table}
\begin{center}
\tiny
\begin{tabular}{|cc|}
						\hline
$CTCGCTCGCTCGCTCGCTCGCTCGCTCGCTCGCTCGCT
CGCT$&
$TCTATCTATCTATCTATCTATCTATCTATCTATCTATC
TATC$\\
$CGCTCGCTCGCTCGCTCGCTCGCTCGCTCGCTCGCTCG
CTCG$&
$GATAGATAGATAGATAGATAGATAGATAGATAGATAGA
TAGA$\\
$AACCAACCAACCAACCAACCAACCAACCAACCAACCAA
CCAA$&
$TTGGTTGGTTGGTTGGTTGGTTGGTTGGTTGGTTGGTT
GGTT$\\
$ACCAACCAACCAACCAACCAACCAACCAACCAACCAAC
CAAC$&
$TGGTTGGTTGGTTGGTTGGTTGGTTGGTTGGTTGGTTG
GTTG$\\
$CACACACACACACACACACACACACACACACACACACA
CACA$&
$GAGCGAGCGAGCGAGCGAGCGAGCGAGCGAGCGAGCGA
GCGA$\\
$ATAGATAGATAGATAGATAGATAGATAGATAGATAGAT
AGAT$&
$TAGATAGATAGATAGATAGATAGATAGATAGATAGATA
GATA$\\
$ACACACACACACACACACACACACACACACACACACAC
ACAC$&
$CTATCTATCTATCTATCTATCTATCTATCTATCTATCT
ATCT$\\
$GCTCGCTCGCTCGCTCGCTCGCTCGCTCGCTCGCTCGC
TCGC$&
$TTTTTTTTTTTTTTTTTTTTTTTTTTTTTTTTTTTTTT
TTTT$\\
$CCCCCCCCCCCCCCCCCCCCCCCCCCCCCCCCCCCCCC
CCCC$&
$GGGGGGGGGGGGGGGGGGGGGGGGGGGGGGGGGGGGGG
GGGG$\\
$ATCTATCTATCTATCTATCTATCTATCTATCTATCTAT
CTAT$&
$TCGCTCGCTCGCTCGCTCGCTCGCTCGCTCGCTCGCTC
GCTC$\\
$GTGTGTGTGTGTGTGTGTGTGTGTGTGTGTGTGTGTGT
GTGT$&
$CGAGCGAGCGAGCGAGCGAGCGAGCGAGCGAGCGAGCG
AGCG$\\
$GCGAGCGAGCGAGCGAGCGAGCGAGCGAGCGAGCGAGC
GAGC$&
$TGTGTGTGTGTGTGTGTGTGTGTGTGTGTGTGTGTGTG
TGTG$\\
$AGATAGATAGATAGATAGATAGATAGATAGATAGATAG
ATAG$&
$CAACCAACCAACCAACCAACCAACCAACCAACCAACCA
ACCA$\\
$GTTGGTTGGTTGGTTGGTTGGTTGGTTGGTTGGTTGGT
TGGT$&
$CCAACCAACCAACCAACCAACCAACCAACCAACCAACC
AACC$\\
$GGTTGGTTGGTTGGTTGGTTGGTTGGTTGGTTGGTTGG
TTGG$&
$AGCGAGCGAGCGAGCGAGCGAGCGAGCGAGCGAGCGAG
CGAG$\\
$TATCTATCTATCTATCTATCTATCTATCTATCTATCTA
TCTA$&
$AAAAAAAAAAAAAAAAAAAAAAAAAAAAAAAAAAAAAA
AAAA$\\
			\hline			
		\end{tabular}
  \caption{DNA codewords of length $21$ corresponding to Example \ref{DNA construction}}		
  \label{codewords2}
   \end{center}
\end{table}

\begin{example}\label{DNA construction}
Let $\C_1 =\la g(x)\ra = \la x^{20} + 3x^{19} + x^{18} + 3x^{17} + x^{16} + 3x^{15} + x^{14} + 3x^{13} + x^{12} + 3x^{11} +  x^{10} + 3x^9 + x^8 + 3x^7 + x^6 + 3x^5 + x^4 + 3x^3 + x^2 + 3x + 1  \ra$ be the $(\text{\textbaro }, \mathfrak{d}, 3)$-CC code over $\R$ of length $21$, where $\textfrak{g}(x) |_r x^{21}-1$ in $\R[x;\text{\textbaro },\mathfrak{d}]$ and $\mathfrak{d}(\textfrak{r}) =(1+2\omega) (\text{\textbaro }(\textfrak{r})-\textfrak{r})$.\\
Now under the Gray map $\phi(a+\omega b)=(a,a+b)$, $\Phi(\C_1)$ is a $(\text{\textbaro }, \mathfrak{d},3)$-CC code with the parameters $(42, 4^2, 21)$. Also, it meets all the conditions of Theorem \ref{CROdd}, therefore it is an R-code over $\mathbb{Z}_4$. 
Let $\C_2$ be the code as in Theorem \ref{DNA2} with length $42$ over $\R$. Then $\Phi(\C_1)+\Phi(\C_2)$ is a DNA code with parameters $(42, 4^2 2^1, 21)$. The corresponding DNA codewords are presented in Table \ref{codewords2}.
\end{example}

\begin{table}[ht]
\begin{center}
\scriptsize
    \begin{tabular}{|cccccc|}
       \hline
       \hline
        $~~n~~$ & $~~\gamma~~$ & $\alpha $ & $\la \textfrak{g}(x) \ra$ & $\Phi(\textfrak{C})$ & DNA codes \\
\hline
\hline
$6$ & $1$ &$ 2$ & $x^3 + 3x^2 + 3x + 1$ & $(12, 4^6, 4)$ & $(12, 4^7, 4)$ \\
\hline
$8$ & $ 1$ &$ 2$& $x^4 + 2 \omega x^3 + (2 \omega + 2)x^2 + 2 \omega x + 1$ & $(16, 4^8, 4)$ & $ (16, 4^8 2^1, 4)$ \\
\hline
$8$ & $ 1$ &$ 1+2\omega$& $x^6 + 2x^5 + x^4 + x^2 + 2x + 1$ & $(16, 4^4 , 8)$ & $ (16, 4^4 2^1, 8)$ \\
\hline

$10$ & $3$ &$ 2$& $x^4 + (2\omega  + 1)x^3 + 3x^2 + (2\omega  + 1)x + 1$ & $(20, 4^{12} , 4)$ & $(20, 4^{12} 2^1, 4)$ \\
\hline

$12 $ & $1 $ &$2 $& $ x^{10} + 2x^9 + x^8 + x^6 + 2x^5 + x^4 + x^2 + 2x + 1$ & $(24, 4^4 , 12) $ & $(24, 4^4 2^1, 12) $ \\
\hline
 
$12 $ & $1 $ &$2 $& $ x^4 + (2\omega + 3)x^3 + (2\omega + 2)x^2 + (2\omega + 3)x + 1$ & $(24, 4^{16}, 4) $ & $(24, 4^{16} 2^1, 4) $ \\

\hline
$ 14$ & $3 $ &$2 $& $x^8 + (2\omega + 3)x^7 + 2x^5 + 2x^4 + 2x^3 + (2\omega + 3)x + 1 $ & $(28, 4^{12} , 8) $ & $(28, 4^{13}, 8) $ \\
\hline
$14 $ & $3 $ &$ 2$& $ x^6 + (2\omega + 1)x^5 + 3x^4 + (2\omega + 1)x^3 + 3x^2 + (2\omega + 1)x + 1$ & $ (28, 4^{16} , 4)$ & $(28, 4^{16} 2^1, 4) $ \\
\hline
$ 15$ & $3 $ &$2 $& $x^9 + 2x^8 + x^7 + 3x^6 + 2x^5 + 2x^4 + 3x^3 + x^2 + 2x + 1 $ & $ (30, 4^{12} , 6)$ & $ (30, 4^{13} , 6)$ \\
\hline
$ 16$ & $1 $ &$ 2$& $x^{10} + 2x^9 + 3x^8 + 2x^7 + 2x^6 + 2x^4 + 2x^3 + 3x^2 + 2x + 1 $ & $(32, 4^{12} , 8) $ & $(32, 4^{12} 2^1, 8) $ \\
\hline
$20 $ & $ 1 $& $ 1+2\omega $ & $
x^{16} + 3x^{15} + 2x^{13} + 2x^{12} + 3x^{11} + x^{10} + x^6 + 3x^5  $ & $(40, 4^8 , 12) $ & $(40, 4^9 , 12) $ \\

$ $ & $ $&$ $ & $ + 2x^3 + 2x^2 + 3x + 1 $ & $ $ & $ $ \\
\hline
$20 $ & $ 1$ &$ 2$& $ x^{12} + 2x^{11} + 3x^{10} + 2x^9 + 2x^6 + 2x^4 + 2x^3 + 3x^2 + 2x + 1 $ & $(40, 4^{16} , 8) $ & $ (40, 4^{17} , 8)$ \\
\hline

$21 $ & $ 3 $& $ 1+2\omega $ & $
x^{20} + 3x^{19} + x^{18} + 3x^{17} + x^{16} + 3x^{15} + x^{14} + 3x^{13} + x^{12} + 3x^{11} $ & $(42, 4^2, 21) $ & $(42, 4^2 2^1, 21) $ \\
$ $ & $ $&$ $ & $ +  x^{10} + 3x^9 + x^8 + 3x^7 + x^6 + 3x^5 + x^4 + 3x^3 + x^2 + 3x + 1  $ & $ $ & $ $ \\
\hline

$23 $ & $ 3 $& $ 3+2\omega $ & $ x^{22} + 3x^{21} + x^{20} + 3x^{19} + x^{18} + 3x^{17} + x^{16} + 3x^{15} + x^{14} + 3x^{13} + x^{12} $ & $(46, 4^2, 23)$ & $(46, 4^2 2^1, 23)$ \\
$ $ & $ $&$ $ & $  + 3x^{11} +  x^{10} + 3x^9 + x^8 + 3x^7 + x^6 + 3x^5 + x^4 + 3x^3 + x^2 + 3x + 1  $ & $ $ & $ $ \\
\hline

 $24 $ & $1 $ &$ 1+2\omega$& $x^{22} + 2x^{21} + x^{20}  + x^{18} + 2x^{17}+  x^{16} + x^{14} + 2x^{13} 
 $ & $(48, 4^4, 24) $ & $(48, 4^4 2^1, 24) $ \\
$ $ & $ $&$ $ & $ + x^{12} + x^{10} + 2x^9 + x^8 + x^6 + 2x^5+ x^4 + x^2 + 2x + 1 $ & $ $ & $ $ \\

 \hline
 
$28 $ & $1 $&$ 1+2\omega$ & $x^{26} + 2x^{25} + x^{24} + x^{22} + 2x^{21} + x^{20} + x^{18} + 2x^{17} + x^{16} + x^{14}  $ & $(56, 4^4, 28) $ & $(56, 4^4 2^1, 28) $ \\
$ $ & $ $&$ $ & $ + 2x^{13} + x^{12} + x^{10} + 2x^9 + x^8 + x^6 + 2x^5 + x^4 + x^2 + 2x + 1  $ & $ $ & $ $ \\

 \hline
$30 $ & $3$ &$ 2$& $x^6 + 2x^5 + 3x^4 + (2\omega + 3)x^3 + 3x^2 + 2x + 1$ & $(60, 4^{48}, 4)$ & $(60, 4^{48} 2^1, 4)$ \\
\hline

\end{tabular}
   \caption{Reversible ($\Phi(\C)$) and DNA codes from $\tdl$-cyclic codes}
    \label{Const.tab}
\end{center}
\end{table}

\begin{remark}
In all the code tables, we write $\alpha$ in place of $\alpha(\text{\textbaro }(\textfrak{r})-\textfrak{r}),~ \textfrak{r} \in \R$ for the value of $\mathfrak{d}(\textfrak{r})$. Further, we compare our codes over $\mathbb{Z}_4$ by the database \cite{z4codes} with respect to Lee distances.
\end{remark}

\section{Conclusion}
In this work, we have explored $\tdl$-CC codes over the ring $\R$. We have determined the criteria for these codes to be reversible and reversible-complement. Further, we have shown that DNA codes can be constructed by extending the obtained R-codes over $\R$. Here, we proposed an efficient method to obtain classical codes with the help of a generator matrix. Implementing the proposed constructions, we have produced many better and optimal linear codes over $\mathbb{Z}_4$ along with several DNA codes. These codes are presented at \cite{Z4codes_Gmats} with their corresponding generator matrices. It would be valuable to examine the experimental performance of these codes. Additionally, exploring different construction techniques to derive DNA codes from generalized skew cyclic codes remains an open problem for future research.
	
\section*{Acknowledgements}
This work is financially supported by the Ministry of Education, Govt. of India, under PMRF-ID: 2702443, and by the Department of Science and Technology, Government of India, under Project CRG/2020/005927, as per Diary No. SERB/F/6780/2020-2021 dated 31 December 2020. The authors express their gratitude to the Indian Institute of Technology Patna for providing the essential research facilities.
\section*{Declarations}
\noindent \textbf{Data Availability Statement}: The authors affirm that the information corroborating the study's conclusions is contained in the paper. The corresponding author can be contacted for any clarification. \\
\textbf{Competing interests}: Regarding the publishing of this paper, the authors affirm that they have no conflicts of interest.\\
\textbf{Use of AI tools declaration:}
The authors affirm that this paper was not produced using AI tools.

\end{document}